\DeclareRobustCommand*{\bfseries}{%
   \not@math@alphabet\bfseries\mathbf
   \fontseries\bfdefault\selectfont
   \boldmath
}
\newcommand{\tr}{\text{tr}}
\newcommand{\Del}{\nabla}
\newcommand{\del}{\partial}
\newcommand{\dd}{\text{d}}
\renewcommand{\tilde}{\widetilde}
\renewcommand{\bar}{\overline}
\tikzset{
	%Define standard arrow tip
    >=stealth',
    %Define style for boxes
    punkt/.style={
           rectangle,
           rounded corners,
           draw=black, very thick,
           text width=6.5em,
           minimum height=2em,
           text centered},
    % Define arrow style
    pil/.style={
           ->,
           thick,
           shorten <=2pt,
           shorten >=2pt,},
    % style to apply some styles to each segment of a path
  on each segment/.style={
    decorate,
    decoration={
      show path construction,
      moveto code={},
      lineto code={
        \path [#1]
        (\tikzinputsegmentfirst) -- (\tikzinputsegmentlast);
      },
      curveto code={
        \path [#1] (\tikzinputsegmentfirst)
        .. controls
        (\tikzinputsegmentsupporta) and (\tikzinputsegmentsupportb)
        ..
        (\tikzinputsegmentlast);
      },
      closepath code={
        \path [#1]
        (\tikzinputsegmentfirst) -- (\tikzinputsegmentlast);
      },
    },
  },
  % style to add an arrow in the middle of a path
  mid arrow/.style={postaction={decorate,decoration={
        markings,
        mark=at position .5 with {\arrow[#1]{stealth'}}
      }}}
}
 \newcommand{\ket}[1]{|#1\rangle}
\newcommand{\braket}[2]{\langle#1|#2\rangle}  
\newtheorem{theorem}{Theorem}
\newtheorem{definition}[theorem]{Definition}
\newtheorem{lemma}[theorem]{Lemma}
\newenvironment{proof}[1][Proof]{\noindent\textbf{#1.} }{\ \rule{0.5em}{0.5em}}
\newenvironment{argument}[1][Argument]{\noindent\textbf{#1.} }{\ \rule{0.5em}{0.5em}}
\tikzset{snake it/.style={decorate, decoration=snake}}
\title{Holographic scattering requires a connected entanglement wedge}
\author[a]{Alex May,}
\author[b]{Geoff Penington,}
\author[b]{and Jonathan Sorce}
\affiliation[a]{Department of Physics and Astronomy, University of British Columbia
6224 Agricultural Road, Vancouver, B.C., V6T 1W9, Canada}
\affiliation[b]{Stanford Institute for Theoretical Physics, Stanford University, 382 Via Pueblo Mall, Stanford, CA 94305-4060, U.S.A.}
\emailAdd{may@phas.ubc.ca}
\emailAdd{geoffp@stanford.edu}
\emailAdd{jsorce@stanford.edu}
\abstract{In AdS/CFT, there can exist local $2$-to-$2$ bulk scattering processes even when local scattering is not possible on the boundary; these have previously been studied in connection with boundary correlation functions. We show that boundary regions associated with these scattering configurations must have $O(1/G_N)$ mutual information, and hence a connected entanglement wedge. One of us previously argued for this statement from the boundary theory using operational tools in quantum information theory. We improve that argument to make it robust to small errors and provide a proof in the bulk using focusing arguments in general relativity. We also provide a direct link to entanglement wedge reconstruction by showing that the bulk scattering region must lie inside the connected entanglement wedge. Our construction implies the existence of nonlocal quantum computation protocols that are exponentially more efficient than the optimal protocols currently known.}
\begin{document} 
\maketitle
\flushbottom

%%%%%%%%%%%%%%%%%%%%%%%%%%%%%%%%%%%%%%%%%%
\section{Introduction}
%%%%%%%%%%%%%%%%%%%%%%%%%%%%%%%%%%%%%%%%%%

Asymptotically anti-de Sitter spacetimes in $(d+1)$ dimensions have the property that $2$-to-$d$ asymptotic scattering configurations may admit local scattering in the bulk but not on the boundary. In \cite{GGP, HPPS, Penedones, MSZ}, this feature was shown to be related to perturbative singularities in boundary correlation functions in holographic theories of quantum gravity. In this work, we relate this causal phenomenon to ideas from quantum information theory: whenever a $2$-to-$2$ asymptotic scattering configuration admits a bulk scattering region in an asymptotically AdS$_{2+1}$ spacetime, we show that boundary regions associated with the scattering process must have $O(1/G_N)$ mutual information in the holographic limit.

The existence or nonexistence of $O(1/G_N)$ mutual information between non-overlapping boundary regions is controlled by the bulk geometry. The HRRT formula \cite{RT, HRT, FLM, LM, DLR} states that the entanglement entropy of a boundary domain of dependence $V$ may be computed, at leading order in $G_N$, by the area of the smallest-area extremal surface homologous to the spatial boundary of $V$:
\begin{equation} \label{eq:HRRT}
    S(A) = \min_{\mathcal{E}_{V} \sim V} \left[ \frac{\text{Area}(\mathcal{E}_{V})}{4 G_N} \right] + O(1).
\end{equation}
The spacelike regions between cross-sections of $V$ and its HRRT surface $\mathcal{E}_{V}^{\text{min}}$ are called \emph{homology regions} $\mathcal{R}_{V}$. Their domain of dependence is called the entanglement wedge $E_W(V).$ It has been argued \cite{CKNR, HHLR, maximin, JLMS, DHW, noisyDHW} that boundary operators in $V$ are dual to bulk operators in $E_W(V)$; this is known as \emph{entanglement wedge reconstruction}. When the mutual information
\begin{equation}
    I(V_1 : V_2) = S(V_1) + S(V_2) - S(V_1 \cup V_2)
\end{equation}
between two spacelike-separated domains of dependence $V_1$ and $V_2$ is nonzero at order $1/G_N$, equation \eqref{eq:HRRT} implies that $E_W(V_1 \cup V_2)$ is connected.\footnote{Here we implicitly assume that $V_1$ and $V_2$ are themselves connected; when they have multiple connected components, the entanglement wedge need only contain a connected component that ``stretches between'' components of $V_1$ and $V_2$.}

Suppose we choose asymptotic boundary `input points' $\{c_1, c_2\}$ and `output points' $\{r_1, r_2\}$, such that local $2$-to-$2$ scattering is possible in the bulk, but not on the boundary. We will show that there must exist large, $O(1/G_N)$ mutual information between the boundary regions $V_1$ and $V_2$, which are defined respectively as the intersection of the pasts of \emph{both} output points and the future of the corresponding input point $c_1$ or $c_2$. This is sketched in Figure \ref{fig:phasetransition3d}.

\begin{figure}
    \centering
    \subfloat[\label{fig:connected-scattering}]{
    \tdplotsetmaincoords{15}{0}
    \begin{tikzpicture}[scale=1.4,tdplot_main_coords]
    \tdplotsetrotatedcoords{0}{30}{0}
    \draw[gray] (-2,-0.25,0) -- (-2,6.25,0);
    \draw[gray] (2,-0.25,0) -- (2,6.25,0);
    
    \begin{scope}[tdplot_rotated_coords]
    
    %R2 filling
    \draw[domain=0:45,variable=\x,smooth, fill=black!60!,opacity=0.8] plot ({-2*sin(\x)}, {1+\x/45}, {2*cos(\x)}) -- plot ({-2*sin((45-\x))}, {3-(45-\x)/45}, {2*cos(45-\x)}) --  plot ({2*sin(\x)}, {3-\x/45}, {2*cos(\x)}) -- plot ({2*sin(45-\x)}, {1+(45-\x)/45}, {2*cos(45-\x)});
    
    %R2 diamond
    \draw[domain=0:45,variable=\x,smooth,thick] plot ({-2*sin(\x)}, {1+\x/45}, {2*cos(\x)});
    \draw[domain=0:45,variable=\x,smooth,thick] plot ({2*sin(\x)}, {1+\x/45}, {2*cos(\x)});
    \draw[domain=0:45,variable=\x,smooth,thick] plot ({-2*sin(\x)}, {3-\x/45}, {2*cos(\x)});
    \draw[domain=0:45,variable=\x,smooth,thick] plot ({2*sin(\x)}, {3-\x/45}, {2*cos(\x)});
    
    \begin{scope}[canvas is xz plane at y=-0.25]
    \draw[gray] (0,0) circle[radius=2] ;
    \end{scope}
    
    \begin{scope}[canvas is xz plane at y=6.25]
    \draw[gray] (0,0) circle[radius=2] ;
    \end{scope}
    
    %bottom parts of red arrows going into p, split so that they look behind the entanglement wedge shading
    \draw[red] (0,1,-2) -- (0,2,-1);
    \draw[red] (0,1,2) -- (0,2,1);
    
    \begin{scope}[canvas is xz plane at y=2]
    
    \draw[gray] (0,0) circle (2);
    
    %lightgray shading of entanglement wedge
    \draw [domain=-45:45,fill=lightgray,opacity=0.8] plot ({2*cos(\x+90)}, {2*sin(\x+90)}) -- (-1.41,1.41) to [out=-45,in=45] (-1.41,-1.41) -- plot ({2*cos(\x-90)}, {2*sin(\x-90)}) -- (1.41,-1.41) to [out=135,in=-135] (1.41,1.41);
    
    %Regions
    \draw [green,ultra thick,domain=-45:45] plot ({2*cos(\x+90)}, {2*sin(\x+90)});
    
    %RT surfaces
    \draw[blue,thick] (1.41,1.41) to [out=-135,in=+135] (1.41,-1.41);
    \draw[blue,thick] (-1.41,1.41) to [out=-45,in=45] (-1.41,-1.41);
    
    \end{scope}
    
    %dashed lines to J's
    \draw[domain=0:90,variable=\x,smooth,dashed] plot ({2*sin(\x+180)}, {3+\x/45}, {2*cos(\x+180)});
    \draw[domain=0:90,variable=\x,smooth,dashed] plot ({2*sin(\x+180)}, {3+\x/45}, {-2*cos(\x+180)});
    \draw[domain=0:90,variable=\x,smooth,dashed] plot ({-2*sin(\x+180)}, {3+\x/45}, {2*cos(\x+180)});
    \draw[domain=0:90,variable=\x,smooth,dashed] plot ({-2*sin(\x+180)}, {3+\x/45}, {-2*cos(\x+180)});
    
    %null rays from p
    \draw[thick, red,-triangle 45] (0,3,0) -- (2,5,0);
    \draw[thick, red,-triangle 45] (0,3,0) -- (-2,5,0);
    
    %top ends of red arrows going into p
    \draw[thick,red,-triangle 45] (0,2,-1) -- (0,3,0);
    \draw[thick,red,-triangle 45] (0,2,1) -- (0,3,0);
    
    %markers and labels for the J's
    \draw plot [mark=*, mark size=1.5] coordinates{(2,5,0)};
    \node[above] at (2,5,0) {$r_1$};
    \draw plot [mark=*, mark size=1.5] coordinates{(-2,5,0)};
    \node[above] at (-2,5,0) {$r_2$};
    
    %R1 filling
    \draw[domain=0:45,variable=\x,smooth, fill=black!50!,opacity=0.8] plot ({-2*sin(\x+180)}, {1+\x/45}, {2*cos(\x+180)}) -- plot ({-2*sin((45-\x)+180)}, {3-(45-\x)/45}, {2*cos(45-\x+180)}) --  plot ({2*sin(\x+180)}, {3-\x/45}, {2*cos(\x+180)}) -- plot ({2*sin(45-\x+180)}, {1+(45-\x)/45}, {2*cos(45-\x+180)});
    
    \begin{scope}[canvas is xz plane at y=2]
    %R1 region
    \draw [green,ultra thick,domain=-45:45] plot ({2*cos(\x-90)}, {2*sin(\x-90)});
    \end{scope}
    
    %R1 diamond
    \draw[domain=0:45,variable=\x,smooth,thick] plot ({-2*sin(\x+180)}, {1+\x/45}, {2*cos(\x+180)});
    \draw[domain=0:45,variable=\x,smooth,thick] plot ({2*sin(\x+180)}, {1+\x/45}, {2*cos(\x+180)});
    \draw[domain=0:45,variable=\x,smooth,thick] plot ({-2*sin(\x+180)}, {3-\x/45}, {2*cos(\x+180)});
    \draw[domain=0:45,variable=\x,smooth,thick] plot ({2*sin(\x+180)}, {3-\x/45}, {2*cos(\x+180)});
    
    %markers and labels for p, c_1,c_2
    \draw plot [mark=*, mark size=1.5] coordinates{(0,1,-2)};
    \node[left] at (0,1,-2) {$c_1$};
    \draw plot [mark=*, mark size=1.5] coordinates{(0,1,2)};
    \node[below] at (0,1,2) {$c_2$};
    \draw plot [mark=*, mark size=1.5] coordinates{(0,3,0)};
    \node[left] at (0,3,0) {$p$};
    
    \end{scope}
    \end{tikzpicture}
    }
    \hfill
    \subfloat[\label{fig:disconnected-scattering}]{
    \tdplotsetmaincoords{13}{0}
    \begin{tikzpicture}[scale=1.4,tdplot_main_coords]
    \tdplotsetrotatedcoords{0}{25}{0}
    \draw[gray] (-2,-0.25,0) -- (-2,6.25,0);
    \draw[gray] (2,-0.25,0) -- (2,6.25,0);
    \begin{scope}[tdplot_rotated_coords]
    
    %R1 Shading
    \draw[domain=0:30,variable=\x,smooth,fill=black!60!,opacity=0.8] plot ({2*sin(\x)}, {1.34+\x/45}, {2*cos(\x)}) -- plot ({2*sin(30-\x)}, {2.66-(30-\x)/45}, {2*cos(30-\x)}) -- plot ({-2*sin(\x)}, {2.66-\x/45}, {2*cos(\x)}) -- plot ({-2*sin(30-\x)}, {1.34+(30-\x)/45}, {2*cos(30-\x)});
    
    \begin{scope}[canvas is xz plane at y=-0.25]
    \draw[gray] (0,0) circle[radius=2] ;
    \end{scope}
    
    \begin{scope}[canvas is xz plane at y=6.25]
    \draw[gray] (0,0) circle[radius=2] ;
    \end{scope}
    
    %in the plane of the RT surfaces
    \begin{scope}[canvas is xz plane at y=2]
    
    \draw[gray] (0,0) circle (2);
    
    %Grey shading
    \draw [domain=60:120,fill=lightgray,opacity=0.8] plot ({2*cos(\x)}, {2*sin(\x)}) -- (-1, 1.73) to [out=-60,in=-120] (1, 1.73);
     \draw [domain=-60:-120,fill=lightgray,opacity=0.8] plot ({2*cos(\x)}, {2*sin(\x)}) -- (-1, -1.73) to [out=60,in=120] (1, -1.73);
    
    %Regions
    \draw [green,ultra thick,domain=60:120] plot ({2*cos(\x)}, {2*sin(\x)});
    
    %RT surfaces
    \draw[blue, thick] (1, 1.73) to [out=-120,in=-60] (-1, 1.73);
    \draw[blue, thick] (1, -1.73) to [out=120,in=60] (-1, -1.73);
    
    \end{scope}
    
    %R2 light cone
    \draw[domain=0:30,variable=\x,smooth,thick] plot ({2*sin(\x)}, {1.34+\x/45}, {2*cos(\x)});
    \draw[domain=0:30,variable=\x,smooth,thick] plot ({-2*sin(\x)}, {1.34+\x/45}, {2*cos(\x)});
    \draw[domain=0:30,variable=\x,smooth,thick] plot ({2*sin(\x)}, {2.66-\x/45}, {2*cos(\x)});
    \draw[domain=0:30,variable=\x,smooth,thick] plot ({-2*sin(\x)}, {2.66-\x/45}, {2*cos(\x)});
    
    %null rays going in to p
    \draw[red,-triangle 45] (0,1.34,2) -- (0,3.34,0);
    \draw[red,-triangle 45] (0,1.34,-2) -- (0,3.34,0);
    
    %null rays coming out of p
    \draw[red,-triangle 45] (0,3.34,0) -- (2,5.34,0);
    \draw[red,-triangle 45] (0,3.34,0) -- (-2,5.34,0);
    
    %dashed lines going to J's
    \draw[domain=0:90,variable=\x,smooth,dashed] plot ({2*sin(\x)}, {2.66+\x/45}, {-2*cos(\x)});
    \draw[domain=0:90,variable=\x,smooth,dashed] plot ({2*sin(\x)}, {2.66+\x/45}, {2*cos(\x)});
    
    \draw[domain=0:90,variable=\x,smooth,dashed] plot ({-2*sin(\x)}, {2.66+\x/45}, {-2*cos(\x)});
    \draw[domain=0:90,variable=\x,smooth,dashed] plot ({-2*sin(\x)}, {2.66+\x/45}, {2*cos(\x)});
    
    %R2 Shading
    \draw[domain=0:30,variable=\x,smooth,fill=black!50!,opacity=0.8] plot ({2*sin(\x+180)}, {1.34+\x/45}, {2*cos(\x+180)}) -- plot ({2*sin(180+30-\x)}, {2.66-(30-\x)/45}, {2*cos(180+30-\x)}) -- plot ({-2*sin(180+\x)}, {2.66-\x/45}, {2*cos(180+\x)}) -- plot ({-2*sin(180+30-\x)}, {1.34+(30-\x)/45}, {2*cos(180+30-\x)});
    
    %markers for p, c_1, c_2
    \draw plot [mark=*, mark size=1.5] coordinates{(0,1.34,2)};
    \node[below] at (0,1.34,2) {$c_2$};
    \draw plot [mark=*, mark size=1.5] coordinates{(0,1.34,-2)};
    \node[below] at (0,1.34,-2) {$c_1$};
    \draw plot [mark=*, mark size=1.5] coordinates{(0,3.34,0)};
    \node[right] at (0.1,3.45,0) {$p$};
    
    %markers for J's
    \draw plot [mark=*, mark size=1.5] coordinates{(2,4.66,0)};
    \node[above] at (2,4.66,0) {$r_1$}; 
    \draw plot [mark=*, mark size=1.5] coordinates{(-2,4.66,0)};
    \node[above] at (-2,4.66,0) {$r_2$}; 
    
    \begin{scope}[canvas is xz plane at y=2]
    \draw [green,ultra thick,domain=-60:-120] plot ({2*cos(\x)}, {2*sin(\x)});
    \end{scope}
    
    %R1 light cone
    \draw[domain=0:30,variable=\x,smooth,thick] plot ({2*sin(\x+180)}, {1.34+\x/45}, {2*cos(\x+180)});
    \draw[domain=0:30,variable=\x,smooth,thick] plot ({-2*sin(\x+180)}, {1.34+\x/45}, {2*cos(\x+180)});
    \draw[domain=0:30,variable=\x,smooth,thick] plot ({2*sin(\x+180)}, {2.66-\x/45}, {2*cos(\x+180)});
    \draw[domain=0:30,variable=\x,smooth,thick] plot ({-2*sin(\x+180)}, {2.66-\x/45}, {2*cos(\x+180)});
    
    \end{scope}
    \end{tikzpicture}
    }
    
    \caption{(a) When a set of four boundary points $\{c_1, c_2, r_1, r_2\}$ has a bulk scattering region, the entanglement wedge of associated domains of dependence is connected. (b) When there is no scattering region in the bulk, the entanglement wedge need not be connected. See Theorem \ref{thm:main} for a formal statement. This figure is reproduced from \cite{may2019quantum}.}
    \label{fig:phasetransition3d}
\end{figure}
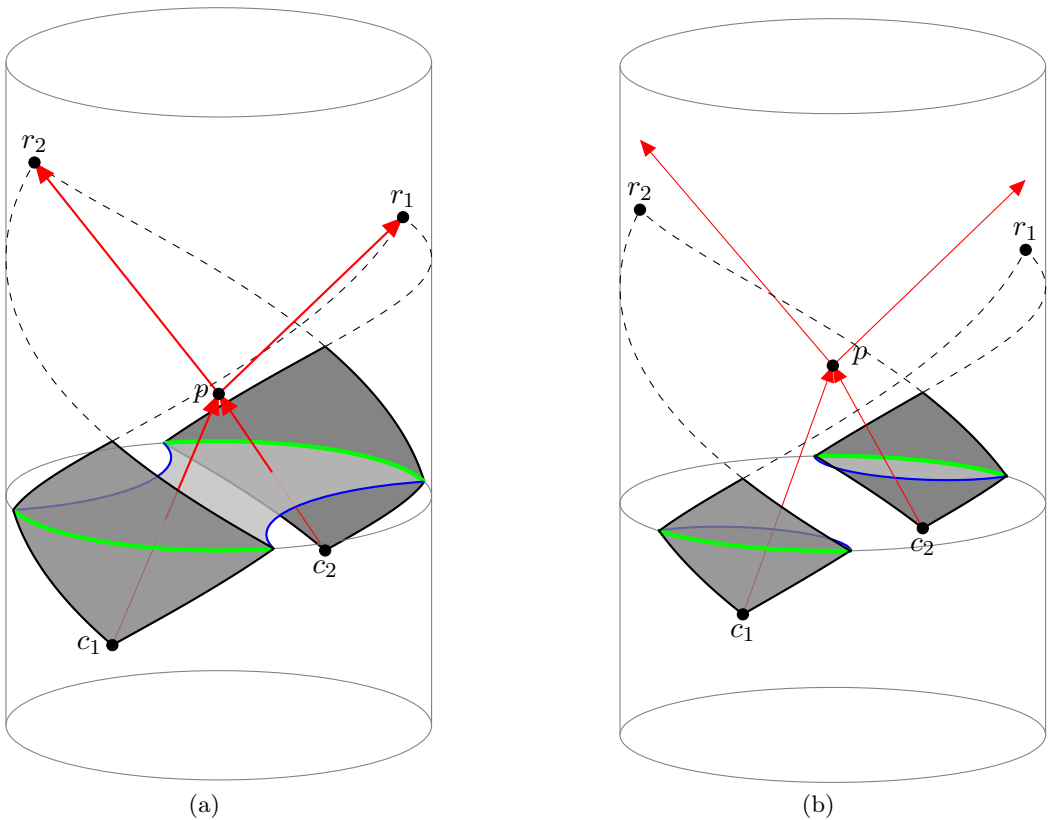

More formally, we define the {bulk scattering region} 
\begin{equation}
    J_{1 2 \rightarrow 12}
        \equiv J^+(c_1) \cap J^+(c_2) \cap J^-(r_1) J^-(r_2).
\end{equation}
as the set of all bulk points that are in the future of both input points and the past of both output points. The corresponding region on the boundary is denoted
\begin{equation}
    \hat{J}_{12 \rightarrow 12}
        \equiv \hat{J}^+(c_1) \cap \hat{J}^+(c_2) \cap \hat{J}^-(r_1) \cap \hat{J}^-(r_2).
\end{equation}
In general, we use $J^{\pm}(S)$ to denote the future/past of a set $S$ in the bulk, and $\hat{J}^{\pm}(S)$ to denote the future/past of $S$ restricted to the boundary. We also use the restricted notation, e.g., $J_{1 \rightarrow 12}$ and $\hat{J}_{1 \rightarrow 12}$ to denote the bulk/boundary points in the future of $c_1$ and the past of both $r_1$ and $r_2$, as above. A \emph{bulk-only scattering configuration} is one in which $J_{1 2 \rightarrow 12}$ has nonempty interior, while $\hat{J}_{1 2 \rightarrow 1 2}$ is empty. Our main theorem is then as follows:
\begin{theorem}[Connected wedge theorem] \label{thm:main}
    Let $\{c_1, c_2, r_1, r_2\}$ be a bulk-only scattering configuration on the boundary of an asymptotically AdS spacetime with a holographic dual. Let $V_1$ and $V_2$ be boundary regions defined by
    \begin{equation}
        V_1 = \hat{J}_{1 \rightarrow 12} \quad \text{and} \quad V_2 = \hat{J}_{2 \rightarrow 12}.
    \end{equation}
    Then $V_1$ and $V_2$ are spacelike-separated domains of dependence, and the entanglement wedge of $V_1 \cup V_2$ is connected.
\end{theorem}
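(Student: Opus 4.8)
The plan is to treat the statement in two stages: first the routine causal-geometry claim that $V_1$ and $V_2$ are spacelike-separated domains of dependence, and then the substantive claim that $E_W(V_1\cup V_2)$ is connected, where essentially all the work lies. For the first stage, on the $1{+}1$-dimensional conformal boundary one has $\hat J^-(r_1)\cap\hat J^-(r_2)=\hat J^-(s)$ for the point $s$ at which the inward-directed past null rays of $r_1$ and $r_2$ first meet, so $V_1=\hat J^+(c_1)\cap\hat J^-(s)$ is a causal diamond — hence a domain of dependence — and likewise for $V_2$; nonemptiness of these diamonds follows from the existence of the bulk point $x$ together with the holographic causality constraint (a bulk causal chain $c_1\to x\to r_i$ forces $c_1$ into the boundary past of both $r_1$ and $r_2$). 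For spacelike separation, suppose $y_1\in V_1$ and $y_2\in V_2$ are boundary-causally related, say $y_1\preceq y_2$; transitivity of $\hat J^+$ then gives $y_2\in\hat J^+(c_1)$, while $y_2\in V_2$ already supplies $y_2\in\hat J^+(c_2)\cap\hat J^-(r_1)\cap\hat J^-(r_2)$, so $y_2\in\hat J_{12\to12}$, contradicting $\hat J_{12\to12}=\emptyset$; the reverse case is symmetric.

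For the connectedness of $E_W(V_1\cup V_2)$ I would first reduce: $V_1$ and $V_2$ are connected, so $E_W(V_1)$ and $E_W(V_2)$ are connected, and entanglement-wedge nesting gives $E_W(V_i)\subseteq E_W(V_1\cup V_2)$; if the latter were disconnected it would be the disjoint union of a piece anchored on $V_1$ containing $E_W(V_1)$ and a piece anchored on $V_2$ containing $E_W(V_2)$. Hence it suffices to show $E_W(V_1)\cap E_W(V_2)\neq\emptyset$, or equivalently — via \eqref{eq:HRRT} — that the connected candidate HRRT surface for $V_1\cup V_2$ beats the disconnected one, i.e. $I(V_1:V_2)>0$ at order $1/G_N$. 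My primary route would be the bulk one advertised in the abstract: show the scattering region sits inside each wedge, $J_{12\to12}\subseteq E_W(V_1)$ and, by symmetry, $J_{12\to12}\subseteq E_W(V_2)$, which, since $J_{12\to12}$ has nonempty interior, immediately yields $E_W(V_1)\cap E_W(V_2)\neq\emptyset$ and also the promised statement that the scattering region lies in the connected wedge. Since $J_{12\to12}\subseteq J^+(c_1)\cap J^-(r_1)\cap J^-(r_2)=:J_{1\to12}$ with boundary shadow $\hat J_{1\to12}=V_1$, it is enough to prove $J_{1\to12}\subseteq E_W(V_1)$; I would argue this by contradiction, placing the HRRT surface $\mathcal{E}_1$ on a maximin slice for $V_1$, noting that a point of $J_{1\to12}$ outside $E_W(V_1)$ would have to lie in the entanglement wedge of the boundary complement of $V_1$ and thus be spacelike to $V_1$, and then deriving a contradiction by running Raychaudhuri's equation (with the null energy condition, valid holographically) along the generators of $\partial J^+(c_1)$, $\partial J^-(r_1)$ and $\partial J^-(r_2)$ to show that $\mathcal{E}_1$ is forced to lie beyond $J_{1\to12}$ as seen from $V_1$.

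The second, boundary-theory route — the one improved from \cite{may2019quantum} — instead lower-bounds $I(V_1:V_2)$ directly: the bulk evolution, having a point $x$ in the common future of $c_1,c_2$ and common past of $r_1,r_2$, coherently collides a system $Q_1$ inserted near $c_1$ with a system $Q_2$ near $c_2$ at $x$ and delivers outputs near $r_1$ and $r_2$, so taking $\log\dim Q_i$ of order $1/G_N$ (small enough to ignore backreaction) realizes a nonlocal quantum computation on $\Omega(1/G_N)$ qubits; entanglement-wedge reconstruction ports the same task to the CFT with inputs held on $V_1$ (which causally receives $c_1$ and reaches both $r_1,r_2$) and on $V_2$, with no signalling between them, and the standard entanglement cost of such tasks then forces the distillable entanglement, hence $I(V_1:V_2)$, to be $\Omega(1/G_N)$, incompatible with the disconnected phase. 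I expect the main obstacle, for either route, to be exactly the quantitative control: in the bulk route, organizing the global intersection pattern of the three light cones and $\mathcal{E}_1$ so that focusing yields a clean contradiction; in the boundary route, tracking the fidelity with which the CFT reproduces the bulk process (and with which the bulk performs the scattering) and showing these errors, together with the $O(1)$ corrections in \eqref{eq:HRRT}, stay parametrically below the $\Omega(1/G_N)$ entanglement cost — this is precisely the ``robust to small errors'' upgrade, which I would implement with a one-shot/approximate version of the nonlocal-computation lower bound. Either way the reduction above then gives that $E_W(V_1\cup V_2)$ is connected.
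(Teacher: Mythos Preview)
Your first stage is fine. The second stage, however, rests on a claim that is simply false, and the paper's actual bulk proof goes by a completely different route.

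\textbf{The false claim.} You reduce to showing $E_W(V_1)\cap E_W(V_2)\neq\emptyset$, and propose to get this from $J_{12\to12}\subseteq E_W(V_1)$ (and symmetrically for $V_2$). But for spacelike-separated boundary intervals in $2{+}1$ dimensions, $E_W(V_1)$ and $E_W(V_2)$ are \emph{always} disjoint: entanglement-wedge nesting gives $E_W(V_2)\subseteq E_W(\bar V_1)$, so $E_W(V_1)\cap E_W(V_2)\subseteq\mathcal{E}_{V_1}\cap\mathcal{E}_{V_2}$, which is generically empty. Concretely, in the symmetric vacuum-AdS$_3$ configuration of Figure~\ref{fig:connected-scattering}, $V_1$ is a quarter-circle interval; its entanglement wedge (equal to its causal wedge in vacuum) does not reach the center of AdS, while the scattering point sits precisely at the center. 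So $J_{12\to12}\not\subseteq E_W(V_1)$, and your sufficient condition is never met. The step ``a point of $J_{1\to12}$ outside $E_W(V_1)$ would have to lie in the entanglement wedge of the complement and thus be spacelike to $V_1$'' is where the argument breaks: such a point can instead lie in the causal future of $\mathcal{E}_{V_1}$, which is neither $E_W(V_1)$ nor $E_W(\bar V_1)$.

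\textbf{What the paper actually does.} The bulk proof in Section~\ref{sec:null-membrane} is a maximin contradiction. Assuming the disconnected phase, one builds a \emph{null membrane}: the inward future lightsheet of $\mathcal{E}^{\min}_{V_1}\cup\mathcal{E}^{\min}_{V_2}$ (the ``lift''), capped where the two sheets meet at a spacelike ``ridge'', glued to the past causal horizons of $r_1,r_2$ (the ``slope'') down to an arbitrary Cauchy slice $\Sigma$ containing $\mathcal{E}^{\min}_{V_1}\cup\mathcal{E}^{\min}_{V_2}$. Focusing on both pieces (extremal-surface lightsheet and causal horizon) plus Stokes' theorem gives $\text{area}(\mathcal{C}_\Sigma)\le\text{area}(\mathcal{E}^{\min}_{V_1}\cup\mathcal{E}^{\min}_{V_2})-2\,\text{area}(\mathcal{R})$, where $\mathcal{C}_\Sigma$ is the membrane's past boundary on $\Sigma$. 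Nonemptiness of $J_{12\to12}$ is used to guarantee the ridge $\mathcal{R}$ is nonempty, so $\mathcal{C}_\Sigma$ is strictly smaller and in the other homology class, contradicting maximin. The paper explicitly flags your intended route --- proving $J_{12\to12}\subseteq E_W(V_1\cup V_2)$ directly and deducing connectedness --- as an open problem (see the opening of Section~\ref{sec:wedge} and Section~\ref{sec:alternative-proofs}); the containment is only established \emph{after} connectedness is known.

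Your boundary route is closer in spirit to Section~\ref{sec:taskssection}, but note that the paper needs a specific task ($\mathbf{B}_{84}^{\times n}$) with a quantitative no-entanglement bound (Lemma~\ref{lemma:cryptoprobabilitybound}) to turn ``high success probability'' into a mutual-information lower bound (Lemma~\ref{lemma:mutualInfoBound}), and even then must confront the loophole from the complementary regions $X_1,X_2$ --- generic ``entanglement cost of nonlocal computation'' bounds are not enough.
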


When $\{c_1, c_2, r_1, r_2\}$ admits a boundary scattering region, $V_1$ and $V_2$ overlap on the boundary and their union $V_1 \cup V_2$ automatically has a connected entanglement wedge. The converse of Theorem \ref{thm:main}, that a connected entanglement wedge implies a nonempty scattering region, is not true.\footnote{For discussion of this point and an explicit counterexample, see Section \ref{sec:null-membrane}.}

Theorem \ref{thm:main} was recently suggested by one of us based on a boundary argument using tools from operational quantum information theory \cite{may2019quantum}. The purpose of the present work is (i) to refine that argument by accounting for errors in the quantum information protocol, (ii) to provide a proof of Theorem \ref{thm:main} in the bulk using semiclassical general relativity together with the HRRT formula, and (iii) to interpret Theorem \ref{thm:main} in terms of recent developments in quantum information theory applied to quantum gravity. 

The arguments of \cite{may2019quantum} suggested that general relativity needs to have a sophisticated understanding of quantum information theory for the AdS/CFT correspondence to be self-consistent. In the present work, we show that this is indeed the case: general relativity does know quantum information. 

This is similar in spirit to earlier work on deriving Einstein's equations from the first law of entanglement \cite{EE-EE1, EE-EE2, EE-EE3, EE-EE4, EE-EE5, EE-EE6}. However, in our case, information-theoretic boundary arguments imply bulk constraints that are highly \emph{nonlocal} in both space and time. The existence of a bulk scattering region in the far future leads to an inequality relating the areas of two surfaces, themselves separated by a large spacelike distance. Nevertheless, general relativity is clever enough to enforce this relationship.

We now give a brief summary of this paper.

In \hyperref[sec:taskssection]{{\bf section two}},
we give a refined version of the arguments toward Theorem \ref{thm:main} from \cite{may2019quantum}. By thinking of the scattering process as a quantum computation that needs to be possible in both the bulk and the boundary, we are able to use operational tools from quantum information theory to argue that $V_1$ and $V_2$ must have $O(1/G_N)$ mutual information. Specifically, we argue that the scattering process can be used to perform a simple quantum computation; conjectured restrictions on the resources necessary to complete the same computation on the boundary give the desired result.

In \hyperref[sec:GR]{{\bf section three}}, we prove Theorem \ref{thm:main} from a bulk perspective using classical general relativity. This proof proceeds by contradiction. We assume that the entanglement wedge of $V_1 \cup V_2$ is disconnected, and then, by applying focusing theorems both to lightsheets of extremal surfaces and to causal horizons, construct a surface with area smaller than $\mathcal{E}_{V_1 \cup V_2}^{\text{min}}$ on any complete achronal slice containing $\mathcal{E}_{V_1 \cup V_2}^{\text{min}}$. This contradicts the maximin procedure for finding HRRT surfaces \cite{maximin}. This proof has the additional feature of providing a lower bound on $I(V_1 : V_2)$ in terms of the geometry of bulk null surfaces.

In \hyperref[sec:wedge]{{\bf section four}}, we prove that whenever the conditions of Theorem \ref{thm:main} are satisfied, the scattering region $J_{12\rightarrow 12}$ must lie inside the entanglement wedge $E_W(V_1 \cup V_2)$.\footnote{We thank Veronika Hubeny and Mukund Rangamani for first suggesting to us that this might be true.} Intuitively, this means that the bulk scattering point is always encoded in the boundary region $\hat{J}_{1 \rightarrow 12} \cup \hat{J}_{2 \rightarrow 12}$, which is the region that can be causally influenced by at least one of the two inputs points $c_i$, and can itself causally influence both output points $r_i$.

In \hyperref[sec:discussion]{{\bf section five}}, we discuss various related issues and open questions. In particular, we point out that there exist quantum computations where the most efficient known protocols require an exponential amount of entanglement, when no scattering region exists. Nonetheless, holography appears to be able to perform the computation with only linear entanglement in the boundary theory. This is consistent with all known lower bounds on the entanglement required, but provides strong evidence that vastly more efficient protocols exist than are currently known, with consequences for quantum cryptography. We also discuss potential generalizations of these ideas to higher-dimensional settings, alternative proofs of Theorem \ref{thm:main}, and a potential connection to metric reconstruction.

Throughout this paper, we work in units with $\hbar = c = 1$, leaving Newton's constant $G_N$ explicit. We also often work in units where the AdS radius satisfies $\ell_{AdS}=1.$ Section \ref{sec:GR} uses the notational conventions of \cite{waldbook} for general relativity, working in a mostly-pluses metric signature and using Latin letters $a, b, \dots$ to denote ``abstract'' tensor indices.

%%%%%%%%%%%%%%%%%%%%%%%%%%%%%%%%%%%%%%%%%%
\section{Boundary proof from quantum tasks}\label{sec:taskssection}
%%%%%%%%%%%%%%%%%%%%%%%%%%%%%%%%%%%%%%%%%%

%%%%%%%%%%%%%%%%%%%%%%%%%%%%%%%%%%%%%%%%%%%%%
\subsection{The \texorpdfstring{$\textbf{B}_{84}$}{TEXT} task} 
%%%%%%%%%%%%%%%%%%%%%%%%%%%%%%%%%%%%%%%%%%%%%

In the introduction we discussed scattering experiments, and insights gained by noting that some arrangements of input and output points give bulk-only scattering configurations. In this section we view such scattering experiments in the context of quantum information --- the in- and out- scattering states become the inputs and outputs of a quantum computation. As we will see, bulk-only scattering configurations place requirements on the boundary state. In particular, arguments from operational quantum information theory suggest that large correlations must be present between the regions $V_1$ and $V_2$ that are associated with the scattering configuration. These large correlations imply a connected entanglement wedge.

To motivate the particular scattering set-up we will use to argue for Theorem \ref{thm:main}, it is helpful to revisit quantum teleportation. Consider Figure \ref{fig:sendreceivetask}, which shows what we call the send-receive task. We consider an input point $c$, where a quantum state $\ket{\psi}$ is received, and an output point $r$, where $\ket{\psi}$ should be sent. Importantly, there are two approaches to completing this send-receive task --- a local approach and a nonlocal approach. The local approach is to record $\ket{\psi}$ in some localized excitation and send it from $c$ to $r$. The nonlocal approach is to use entanglement to teleport $\ket{\psi}$ from $c$ to $r$. In the context of holographic traversable wormholes \cite{gao2017traversable,susskind2018teleportation,maldacena2017diving}, from a bulk perspective a localized excitation travels through a wormhole --- i.e. the local approach --- while in the boundary picture the excitation is transmitted via a version of quantum teleportation --- the nonlocal approach. 

\begin{figure}
    \centering
    \subfloat[\label{fig:sendreceivetask}]{
    \begin{tikzpicture}[scale=0.5]
    
    \node[below left] at (-4,0) {$c$};
    \draw[fill=black] (-4,0) circle (0.15);
    \draw[->] (-4,-1) -- (-4,-0.1);
    \node[below] at (-4,-1) {$\ket{\psi}$};

    \draw[->] (4,8) -- (4,9);
    \node[below right] at (4,8) {$r$};
    \draw[fill=blue] (4,8) circle (0.15);
    \node[above] at (4,9) {$\ket{\psi}$};
    
    \draw[->] (-7,-2) -- (-7,-0.5);
    \draw[->] (-7,-2) -- (-5.5,-2);
    \node[below] at (-5.5,-2) {$x$};
    \node[right] at (-7,-0.5) {$t$};
    
    \end{tikzpicture}
    }
    \hfill
    \subfloat[\label{fig:B84taskschematic}]{
    \begin{tikzpicture}[scale=0.5]
    
    \node[below left] at (-4,0) {$c_1$};
    \draw[fill=black] (-4,0) circle (0.15);
    \draw[->] (-4,-1) -- (-4,-0.1);
    \node[below] at (-4,-1) {$H^q\ket{b}$};

    \node[below right] at (4,0) {$c_2$};
    \draw[fill=black] (4,0) circle (0.15);
    \draw[->] (4,-1) -- (4,-0.1);
    \node[below] at (4,-1) {$q$};

    \draw[->] (4,8) -- (4,9);
    \node[below right] at (4,8) {$r_2$};
    \draw[fill=blue] (4,8) circle (0.15);
    \node[above] at (4,9) {$b$};

    \draw[->] (-4,8) -- (-4,9);
    \node[below left] at (-4,8) {$r_1$};
    \draw[fill=blue] (-4,8) circle (0.15);
    \node[above] at (-4,9) {$b$};
    
    \draw[->] (-7,-2) -- (-7,-0.5);
    \draw[->] (-7,-2) -- (-5.5,-2);
    \node[below] at (-5.5,-2) {$x$};
    \node[right] at (-7,-0.5) {$t$};

    \end{tikzpicture}
    }
    \caption{a) The send-receive task. A quantum state $\ket{\psi}$ is received at a spacetime location $c$, and must be returned to $r$. b) The $\textbf{B}_{84}$ task, which we employ to prove the connected wedge theorem. At $c_1$ the quantum system $A$ is received which holds a state $H^q\ket{b}$. At $c_2$ a classical bit $q\in\{0,1\}$ is received. For the task to be completed successfully, $b$ should be produced at both $r_1$ and $r_2$.}
    \label{fig:twotasks}
\end{figure}
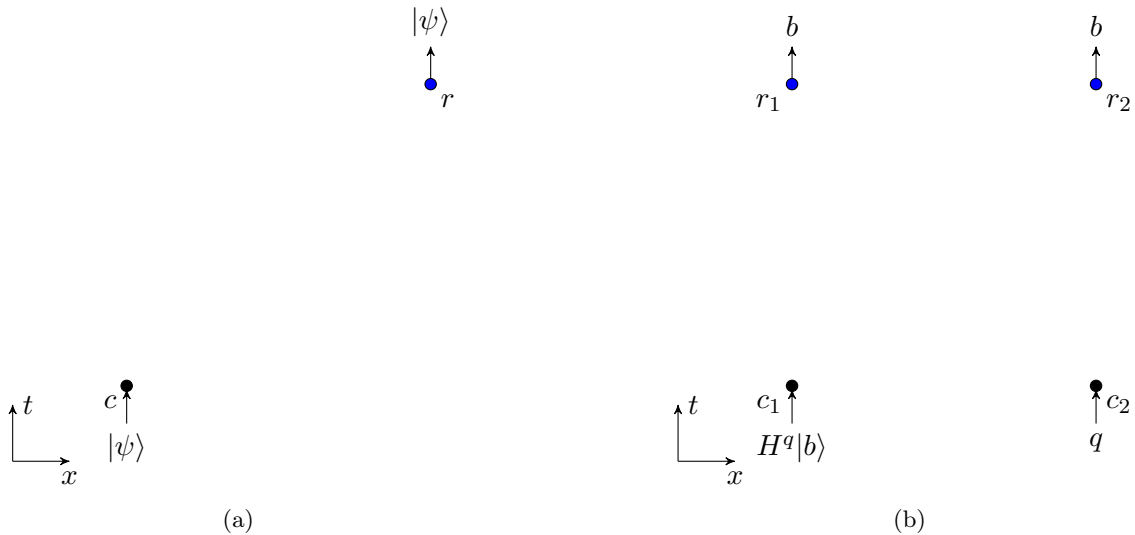

Now consider the set-up shown in Figure \ref{fig:B84taskschematic}. For historical reasons \cite{bennett1984proceedings}, we refer to this as the $\textbf{B}_{84}$ task. The task has two inputs and two outputs. The input locations are labeled by $c_1$, $c_2$, the output locations by $r_1$, $r_2$. At $c_2$, a classical bit $q \in \{0,1\}$ is received. At $c_1$, a qubit in the state $H^q\ket{b}$ is received, where $b\in \{0,1\}$ and $H$ is the Hadamard operator\footnote{The Hadamard operator satisfies $H\ket{0} = \ket{+}$ and $H\ket{1} = \ket{-}$, with $\ket{\pm}= (\ket{0}\pm \ket{1})/\sqrt{2}$.}. We consider the input parameters $b$ and $q$ to be drawn independently and at random. To complete the task, $b$ should be made available at $r_1$ and at $r_2$.

As with the send-receive task, there are local and nonlocal procedures for completing the $\textbf{B}_{84}$ task. We will argue that, as for traversable wormholes, the task can be completed locally in the bulk, but must be completed nonlocally on the boundary. The local approach to completing $\textbf{B}_{84}$ is straightforward. Send $q$ from $c_2$ to $J_{12\rightarrow 12}$, and $H^q\ket{b}$ from $c_1$ to $J_{12\rightarrow 12}$. Inside $J_{12\rightarrow 12}$, apply $H^q$ and then measure in the computational basis to determine $b$. Finally, send $b$ from $J_{12\rightarrow 12}$ to both $r_1$ and $r_2$. We can use this simple protocol to complete the $\textbf{B}_{84}$ task whenever the spacetime region $J_{12\rightarrow 12}$ is nonempty. 

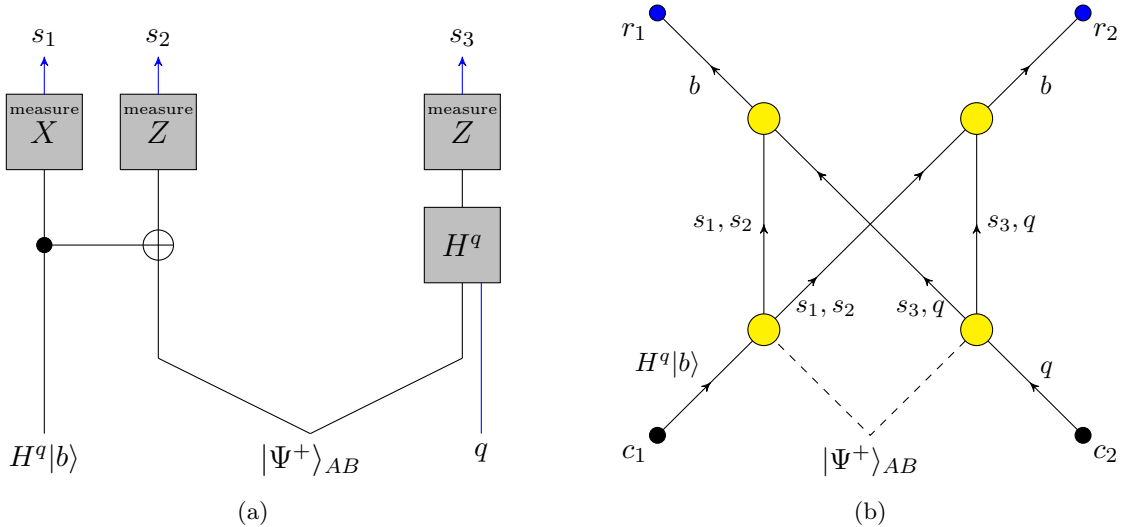
\begin{figure}
    \centering
    \subfloat[\label{fig:nonlocalcircuit}]{
    \begin{tikzpicture}[node distance = 2cm, auto,rotate=90]
    
    \draw (0,0) -- (1,2);
    \draw (0,0) -- (1,-2);
    \node[below] at (0,0) {\large{$\ket{\Psi^+}_{AB}$}};
    
    \draw (1,-2) -- (2,-2);
    
    \draw[fill=lightgray] (2,-2.5) -- (2,-1.5) -- (3,-1.5) -- (3,-2.5) -- (2,-2.5); 
    \node at (2.5,-2) {\large{$H^q$}};
    
    \draw (3,-2) -- (3.5,-2);
    
    \draw[fill=lightgray] (3.5,-2.5) -- (3.5,-1.5) -- (4.5,-1.5) -- (4.5,-2.5) -- (3.5,-2.5);
    \node[above] at (4.1,-2) {\tiny{measure}};
    \node at (4,-2) {\large{$Z$}};
    
    \draw (1,2) -- (3.5,2);
    \draw (0,3.5) -- (3.5,3.5);
    
    \node[below] at (0,3.5) {$H^q\ket{b}$};
    
    \draw[fill=black] (2.5,3.5) circle (0.1);
    \draw (2.5,3.5) -- (2.5,1.8);
    \draw (2.5,2) circle (0.2);
    
    \draw[fill=lightgray] (3.5,4) -- (4.5,4) -- (4.5,3) -- (3.5,3) -- (3.5,4);
    \node[above] at (4.1,3.5) {\tiny{measure}};
    \node at (4,3.5) {\large{$X$}};
    
    \draw[fill=lightgray] (3.5,2.5) -- (4.5,2.5) -- (4.5,1.5) -- (3.5,1.5) -- (3.5,2.5);
    \node[above] at (4.1,2) {\tiny{measure}};
    \node at (4,2) {\large{$Z$}};
    
    \draw[blue,->] (4.5,2) -- (5,2);
    \node[above] at (5,2) {$s_2$};
    
    \draw[blue,->] (4.5,3.5) -- (5,3.5);
    \node[above] at (5,3.5) {$s_1$};
    
    \draw[blue,->] (4.5,-2) -- (5,-2);
    \node[above] at (5,-2) {$s_3$};
    
    \draw[blue] (0,-2.25) -- (2,-2.25);
    \node[below] at (0,-2.25) {$q$};
    
    \end{tikzpicture}
    }
    \hfill
    \subfloat[\label{fig:nonlocalschematic}]{
    \begin{tikzpicture}[scale=0.7]
    
    \draw[postaction={on each segment={mid arrow}}] (-4,0) -- (-2,2) -- (-2,6) -- (-4,8);
    \draw[postaction={on each segment={mid arrow}}] (4,0) -- (2,2) -- (2,6) -- (4,8);
    \draw[postaction={on each segment={mid arrow}}] (-2,2) -- (0,4) -- (2,6);
    \draw[postaction={on each segment={mid arrow}}] (2,2) -- (0,4) -- (-2,6);
    
    \draw[dashed] (2,2) -- (0,0) -- (-2,2);
    \node[below] at (0,0) {$\ket{\Psi^+}_{AB}$};
    
    \draw[fill=yellow] (-2,2) circle (0.3);
    \draw[fill=yellow] (2,2) circle (0.3);
    \draw[fill=yellow] (-2,6) circle (0.3);
    \draw[fill=yellow] (2,6) circle (0.3);
    
    \node[below left] at (-4,0) {$c_1$};
    \draw[fill=black] (-4,0) circle (0.15);

    \node[below right] at (4,0) {$c_2$};
    \draw[fill=black] (4,0) circle (0.15);

    \node[below right] at (4,8) {$r_2$};
    \draw[fill=blue] (4,8) circle (0.15);

    \node[below left] at (-4,8) {$r_1$};
    \draw[fill=blue] (-4,8) circle (0.15);
    
    \node[above left] at (-3,0.9) {\small{$H^q\ket{b}$}};
    \node[above right] at (3,0.9) {\small{$q$}};
    
    \node[below left] at (-3,7) {\small{$b$}};
    \node[below right] at (3,7) {\small{$b$}};
    
    \node[left] at (-2,4) {\small{$s_1,s_2$}};
    \node[right] at (2,4) {\small{$s_3,q$}};
    
    \node[right] at (-1.6,2.4) {\small{$s_1,s_2$}};
    \node[left] at (1.6,2.4) {\small{$s_3,q$}};
    
    \end{tikzpicture}
    }
    \caption{Circuit diagram that completes the $\mathbf{B}_{84}$ task. Blue lines indicate classical inputs and outputs. The protocol uses one EPR pair, $\ket{\Psi^+}_{AB}=\frac{1}{\sqrt{2}}(\ket{00}+\ket{11})$ as a resource. $b$ is a function of the classical measurement outcomes according to $(-1)^b= s_1^qs_2^{1-q}s_3$. Importantly, the operations performed in the circuit can be placed in the nonlocal form shown in b).}
\end{figure}

To complete the $\textbf{B}_{84}$ task using a nonlocal approach, consider the quantum circuit shown in Figure \ref{fig:nonlocalcircuit}. It is straightforward to check that this circuit completes the task. There are two important features of this circuit. First, it can be run without ever performing a gate within the region $J_{12\rightarrow 12}$. This is why we say that it completes the task nonlocally. Second, the circuit makes use of an EPR pair shared between $c_1$ and $c_2$. This shared entanglement is necessary to complete the task using a nonlocal approach. 

Before studying the $\textbf{B}_{84}$ task in more detail we should make explicit its connection to holography. Consider Figure \ref{fig:bulkandboundaryB84}. There, we have embedded the $\textbf{B}_{84}$ task into a scattering scenario in an asymptotically AdS$_{2+1}$ spacetime. Specifically, the points $c_1,c_2,r_1,r_2$ of the $\textbf{B}_{84}$ task become points on the boundary of AdS. The inputs $q,H^q\ket{b}$ are recorded into localized excitations falling in towards the center of AdS, and the outputs are localized excitations coming out towards the boundary. We can consider this task within either the bulk or boundary descriptions. Since the AdS/CFT duality relates the two viewpoints, any process that successfully completes the task in the bulk must be dual to a corresponding boundary process that also successfully completes the task. 

\begin{figure}
    \centering
    \subfloat[\label{subfig:B84bulk}]{
    \tdplotsetmaincoords{10}{0}
    \begin{tikzpicture}[scale=1.0,tdplot_main_coords]
    \tdplotsetrotatedcoords{0}{20}{0}
    \draw (-2,0,0) -- (-2,4,0);
    \draw (2,0,0) -- (2,4,0);
    
    \begin{scope}[tdplot_rotated_coords]
    \begin{scope}[canvas is xz plane at y=0]
    \draw (0,0) circle [radius=2];
    \end{scope}
    
    \begin{scope}[canvas is xz plane at y=4]
    \draw (0,0) circle [radius=2];
    \end{scope}
    
    \draw plot [mark=*, mark size=2] coordinates{(2,0,0)};
    \node[right] at (2,0,0) {$c_2$};
    \draw[->] (2,-0.5,0) -- (2,0,0);
    \node[below] at (2,-0.5,0) {$q$};
    
    \draw plot [mark=*, mark size=2] coordinates{(-2,0,0)};
    \node[left] at (-2,0,0) {$c_1$};
    \draw[->] (-2,-0.5,0) -- (-2,0,0);
    \node[below] at (-2,-0.5,0) {$H^q\ket{b}$};
    
    \draw[blue] plot [mark=*, mark size=2] coordinates{(0,4,-2)};
    \node[below left] at (0,4,-2) {$r_1$};
    \draw[->] (0,4,-2) -- (0,4.5,-2);
    \node[above] at (0,4.5,-2) {$b$};
    
    \draw[blue] plot [mark=*, mark size=2] coordinates{(0,4,2)};
    \node[below right] at (0,4,2) {$r_2$};
    \draw[->] (0,4,2) -- (0,4.5,2);
    \node[above] at (0,4.5,2) {$b$};
    
    \end{scope}
    \end{tikzpicture} 
    }
    \hfill
    \subfloat[\label{subfig:B84boundary}]{
    \begin{tikzpicture}[scale=1.25]
    
    \draw (-2,0) -- (2,0) -- (2,2) -- (-2,2) -- (-2,0);
    
    \draw[black] plot [mark=*, mark size=2] coordinates{(-2,0)};
    \draw[->] (-2,-0.5)--(-2,-0.05);
    \node[below] at (-2,-0.5) {$q$};
    \node[below left] at (-2,0) {$c_2$};
    
    \draw[black] plot [mark=*, mark size=2] coordinates{(2,0)};
    \node[below right] at (2,0) {$c_2$};
    
    \draw[black] plot [mark=*, mark size=2] coordinates{(0,0)};
    \draw[->] (0,-0.5)--(0,-0.05);
    \node[below] at (0,-0.5) {$H^q\ket{b}$};
    \node[below left] at (0,0) {$c_1$};
    
    \draw[->] (-1,2) -- (-1,2.5);
    %\node[above] at (-1,2.5) {$b$};
    \draw[blue] plot [mark=*, mark size=2] coordinates{(-1,2)};
    \node[above right] at (-1,2) {$r_1$};
    
    \draw[->] (1,2) -- (1,2.5);
    %\node[above] at (1,2.5) {$b$};
    \draw[blue] plot [mark=*, mark size=2] coordinates{(1,2)};
    \node[above right] at (1,2) {$r_2$};
    
    \node at (0,-1.5) {$ $};
    \node at (0,3.5) {$ $};
    
    \end{tikzpicture}
    }
    \caption{The $\mathbf{B}_{84}$ task considered in an asymptotically AdS spacetime. The bulk view is shown at left, while the boundary perspective is at right. The task should be possible in the boundary whenever it is possible in the bulk.}
    \label{fig:bulkandboundaryB84}
\end{figure}
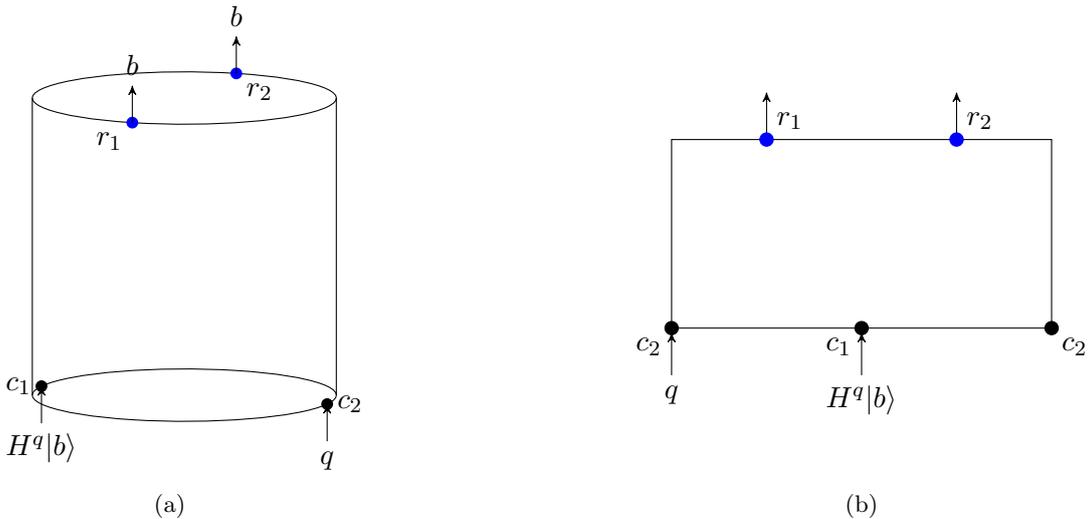

Consider first the bulk picture. If the spacetime region $J_{12\rightarrow 12}$ is nonempty, then it should be possible to successfully complete the $\textbf{B}_{84}$ task in the bulk, by using the local procedure described above. Of course, the details of how one would do this are somewhat complicated. Essentially, one would need to construct a machine within the bulk theory that can (i) carry the qubit $H^q\ket{b}$, (ii) receive and process the classical bit $q$, (iii) measure the qubit, and finally (iv) send the classical signal $b$ to both output points. This machine can be constructed at the asymptotic boundary, at point $c_1$, and then thrown into the scattering region $J_{12\rightarrow 12}$, where it can receive the classical signal $q$ from the input point $c_2$ and then send the output signal $b$ to both $r_1$ and $r_2$.

Can such a machine, or some equivalent method of completing the $\textbf{B}_{84}$ task, actually exist in any given bulk theory? This question is hard to answer definitively: without using our knowledge of the real world, it even seems prohibitively difficult to argue from first principles that such a machine could be constructed out of the standard model. However, since it is believed that string theory is rich enough to describe the real world, we are hopeful that there should exist holographic theories where such objects can be constructed in the bulk.

Let us assume that we are able to find a bulk process that successfully completes the $\textbf{B}_{84}$ task. We now want to understand the same process in the dual boundary perspective. At this point the geometrical observation discussed in the introduction becomes important. Even while the bulk region $J_{12\rightarrow 12}$ is nonempty, the boundary region $\hat{J}_{12\rightarrow 12}$ may be empty. This means that in the CFT picture the task cannot be completed locally. Instead, the local bulk process must be completed nonlocally in the boundary. The necessity of entanglement for any successful nonlocal strategy leads to Theorem \ref{thm:main}. 

Before making this claim more precise, it is useful to consider a generalization of the $\textbf{B}_{84}$ task. In particular we will consider the one-parameter family of tasks denoted $\textbf{B}_{84}^{\times n}$, which consist of the $\textbf{B}_{84}$ task repeated $n$ times in parallel. The parameters $b_i$ and $q_i$ for each repetition of the task are independent. From quantum cryptography we know that the success probability of the $\textbf{B}_{84}^{\times n}$ task is bounded by the following lemma \cite{tomamichel2013monogamy}:
\begin{lemma}\label{lemma:cryptoprobabilitybound} Consider the $\textbf{B}_{84}^{\times n}$ task with the bipartite resource state $\rho_{AB}$, with $A$ available at $c_1$ and $B$ available at $c_2$. Suppose the task must be completed without use of the region $J_{12\rightarrow 12}$. If $I(A:B)=0$ then $p_{\text{suc}}(\textbf{B}_{84}^{\times n}) \leq \beta^n$, with $\beta = \cos^2(\pi/8)$.
\end{lemma}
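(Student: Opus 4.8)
The plan is to recognize, under the hypotheses, the $\textbf{B}_{84}^{\times n}$ task as an instance of the BB84 monogamy-of-entanglement game of Tomamichel, Fehr, Kaniewski and Wehner \cite{tomamichel2013monogamy}, and to import their bound on its parallel-repeated value. The first step is to dispose of the resource state: since $I(A:B)=0$, it factorizes as $\rho_{AB}=\rho_A\otimes\rho_B$, and such a product state can be prepared independently by the agent at $c_1$ and the agent at $c_2$ from local ancillas, so it provides no advantage and we may assume the two agents share no resource at all. (Equivalently: for a fixed strategy $p_{\text{suc}}$ is affine in $\rho_{AB}$, so by the spectral decomposition of $\rho_A\otimes\rho_B$ it suffices to bound $p_{\text{suc}}$ for pure product resources $\ket{\phi}_A\otimes\ket{\chi}_B$, each of which is locally preparable.)

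The second step is to extract the monogamy structure from the constraint that $J_{12\to 12}$ is not used. The agent at $c_1$ holds the $n$ qubits $\bigotimes_{i=1}^{n}H^{q_i}\ket{b_i}$ but not the string $q^n=(q_1,\dots,q_n)$; whatever it does amounts to a quantum channel splitting these qubits into two systems $Q_1$ and $Q_2$ routed toward $r_1$ and $r_2$ respectively. The outputs $r_1,r_2$ lie in the causal future of both inputs, so $q^n$ can be delivered from $c_2$ to an agent near each output; but any point that is in the future of both $c_1$ and $c_2$ and also in the past of both $r_1$ and $r_2$ lies in $J_{12\to 12}$, which is forbidden, so the $Q_1$-branch and the $Q_2$-branch can never recombine once $q^n$ has been incorporated. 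Hence every admissible strategy is equivalent to the following: apply a channel carrying the $n$ qubits to $Q_1\otimes Q_2$; then one decoder holding $(Q_1,q^n)$ produces a guess for $b^n$, and a second decoder holding $(Q_2,q^n)$ and \emph{unable to communicate with the first} produces its own guess; the task succeeds iff both guesses equal $b^n$. This is exactly the $n$-fold parallel repetition of the BB84 monogamy game, whose single-round value is $\cos^{2}(\pi/8)=\tfrac12+\tfrac1{2\sqrt2}=\beta$, attained by measuring each qubit in the intermediate (Breidbart) basis and having both decoders output that measurement result.

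It remains to control the $n$-fold value, and this is where the real content sits. For generic two-prover games the $n$-fold value can greatly exceed the $n$-th power of the single-shot value, so $p_{\text{suc}}(\textbf{B}_{84}^{\times n})\le\beta^n$ is not automatic; it is the perfect parallel-repetition theorem of \cite{tomamichel2013monogamy}, which proves $\mathrm{val}(G^{\times n})=\mathrm{val}(G)^n$ exactly for this particular single-round game $G$. I expect this parallel-repetition input to be the main obstacle: proving it ab initio would mean reproducing their argument bounding a certain operator norm built from the measurement projectors, via an overlap estimate for sums of projectors that factorizes over the $n$ copies because the encoding projectors are product operators, so the efficient route is to invoke their theorem once the reduction above is in place. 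The one point needing care on our side is the faithfulness of that reduction — checking that forbidding $J_{12\to 12}$ really rules out every way of reuniting the two output branches after $q^n$ has been learned, including schemes that first route $q^n$ backward into the future of $c_1$ before splitting; this holds because any such reunion point would necessarily lie in $J^+(c_1)\cap J^+(c_2)\cap J^-(r_1)\cap J^-(r_2)=J_{12\to 12}$.
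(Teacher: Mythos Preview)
Your proposal is correct and matches the paper's approach: the paper does not prove this lemma at all but simply writes ``We refer to the relevant literature for the proof \cite{tomamichel2013monogamy}.'' Your reduction of the $\textbf{B}_{84}^{\times n}$ task (with product resource and no access to $J_{12\to12}$) to the BB84 monogamy-of-entanglement game is the intended content of that citation, and the perfect parallel-repetition theorem of Tomamichel--Fehr--Kaniewski--Wehner is exactly the input you identify; you have simply made explicit the reduction that the paper leaves implicit.
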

We refer to the relevant literature for the proof \cite{tomamichel2013monogamy}. Note that the success probability of $\textbf{B}_{84}^{\times n}$ is defined by the rate at which all the outputs $b_i$ are produced correctly. In the above lemma we have considered a general resource state $\rho_{AB}$, which plays the role of the EPR pair in the particular nonlocal protocol we introduced in Figure \ref{fig:nonlocalcircuit}.

Lemma \ref{lemma:cryptoprobabilitybound} gives the strongest possible bound on success probability for the $\textbf{B}_{84}^{\times n}$ task, as the bound is actually achievable. To see this, suppose Alice adopts the following protocol. Upon receiving $H^q\ket{b}$, she measures in the basis intermediate between $\{\ket{0},\ket{1}\}$ and $\{\ket{+},\ket{-}\}$, namely she measures in the basis $\{\ket{\psi_0},\ket{\psi_1}\}$ with 
\begin{align}
    \ket{\psi_0} &= \cos\left( \frac{\pi}{8}\right) \ket{0} + \sin \left( \frac{\pi}{8}\right)\ket{1}, \\ 
    \ket{\psi_1} &= \cos\left(\frac{5\pi}{8}\right) \ket{0} + \sin \left( \frac{5\pi}{8}\right)\ket{1}.
\end{align}
Her guessing strategy is to guess outcome $b=0$ when she measures $\ket{\psi_0}$ and outcome $b=1$ when she measures $\ket{\psi_1}$. Since Alice receives one of the four states $\{\ket{0},\ket{1},\ket{+},\ket{-}\}$ with equal probability, her success probability is
\begin{align}
    p_{\text{suc}} &= \frac{1}{4}|\braket{0}{\psi_0}|^2 + \frac{1}{4}|\braket{1}{\psi_1}|^2 + \frac{1}{4}|\braket{+}{\psi_0}|^2 + \frac{1}{4}|\braket{-}{\psi_1}|^2\nonumber \\
    &= \cos^2\left(\frac{\pi}{8}\right)\nonumber \\
    &=\beta,
\end{align}
with $\beta$ defined as in Lemma \ref{lemma:cryptoprobabilitybound}. Alice repeats this protocol for each parallel repetition of the task, giving her an overall success probability of $\beta^n$.  

Later, we will argue that the $\textbf{B}_{84}$ task can be completed in the boundary with high probability. Since zero mutual information implies low success probability, we may think high success probability implies large mutual information. We prove this in the next lemma. 
\begin{lemma}\label{lemma:mutualInfoBound}
    Consider the $\textbf{B}_{84}^{\times n}$ task when no central region is available. Suppose that the task is completed with probability $p_{\text{suc}} \geq 1 - \epsilon$, using a resource state $\rho_{AB}$. Then $\rho_{AB}$ must satisfy
    \begin{align}
        \frac{1}{2}I(A:B)_{\rho} \geq - \log[2(\epsilon+\beta^n)]
    \end{align}
    with $\beta = \cos^2(\pi/8)$.  
\end{lemma}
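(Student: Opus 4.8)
The plan is to compress the entire spacetime protocol into a single resource-independent measurement operator acting on $\rho_{AB}$, use Lemma~\ref{lemma:cryptoprobabilitybound} on the corresponding product state to see that this operator sharply distinguishes $\rho_{AB}$ from $\rho_A\otimes\rho_B$, and then convert that distinguishability into a lower bound on $I(A:B)_\rho=D(\rho_{AB}\|\rho_A\otimes\rho_B)$ via standard one-shot inequalities.

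First I would fix any protocol that completes $\textbf{B}_{84}^{\times n}$ without the central region with $p_{\text{suc}}\geq 1-\epsilon$. Conditioned on the inputs $b=(b_1,\dots,b_n)$ and $q=(q_1,\dots,q_n)$, and with all ancillas prepared in fixed states, the map from the joint initial state to the two output registers is a fixed channel $\Lambda_{b,q}$---depending only on the protocol, not on the resource---and ``all outputs correct at both $r_1$ and $r_2$'' is a fixed projector $\Pi_{b}$ on the output registers. Pulling $\Pi_{b}$ back through $\Lambda_{b,q}^{\dagger}$, averaging over the uniform input distribution, and tracing out the input and ancilla registers yields an operator $N$ with $0\leq N\leq 1$ on $\mathcal{H}_{AB}$ such that $p_{\text{suc}}(\eta_{AB})=\tr(N\eta_{AB})$ for \emph{every} resource state $\eta_{AB}$. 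The point is that one and the same $N$ governs the performance of $\rho_{AB}$ and of the product state $\rho_A\otimes\rho_B$.

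Second, since $\rho_A\otimes\rho_B$ has $I(A:B)=0$, Lemma~\ref{lemma:cryptoprobabilitybound} gives $\tr\big(N(\rho_A\otimes\rho_B)\big)=p_{\text{suc}}(\rho_A\otimes\rho_B)\leq\beta^n$, while by hypothesis $\tr(N\rho_{AB})\geq 1-\epsilon$. Hence the two-outcome measurement $\{N,1-N\}$ distinguishes $\rho_{AB}$ from $\rho_A\otimes\rho_B$ with bias at least $1-\epsilon-\beta^n$, so $\tfrac12\|\rho_{AB}-\rho_A\otimes\rho_B\|_1\geq 1-\epsilon-\beta^n$ (if $\epsilon+\beta^n\geq\tfrac12$ the lemma is vacuous because $I(A:B)_\rho\geq 0$, so assume otherwise). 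The Fuchs--van de Graaf inequality then forces $F(\rho_{AB},\rho_A\otimes\rho_B)^2\leq 1-(1-\epsilon-\beta^n)^2\leq 2(\epsilon+\beta^n)$; the same conclusion follows directly from the ``measured fidelity'' bound $F(\rho_{AB},\rho_A\otimes\rho_B)\leq\sqrt{\beta^n}+\sqrt{\epsilon}$ applied to $\{N,1-N\}$.

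Finally, using the identity $I(A:B)_\rho=D(\rho_{AB}\|\rho_A\otimes\rho_B)$ together with the relation $D(\rho\|\sigma)\geq -2\log F(\rho,\sigma)$ (monotonicity of the sandwiched R\'enyi divergence in its order, between $\alpha=\tfrac12$ and $\alpha=1$), the fidelity bound above converts into a lower bound on $I(A:B)_\rho$ of the form claimed in the lemma. I expect the only genuine obstacle to be the first step: one must verify that ``completing the task without the central region'' really does collapse, once inputs and ancillas are fixed, to a single resource-independent POVM element on $\mathcal{H}_{AB}$, so that Lemma~\ref{lemma:cryptoprobabilitybound}---which is phrased for the product resource---applies to exactly the operator $N$ that also controls $\rho_{AB}$. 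With that in hand, the rest is a short, standard chain of one-shot information inequalities, and the route through fidelity (rather than, say, Pinsker's inequality) is what makes the bound diverge appropriately as $\epsilon,\beta^n\to 0$.
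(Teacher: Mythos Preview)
Your proposal is correct and follows essentially the same route as the paper. The paper writes $I(A:B)_\rho=S(\rho_{AB}\|\rho_A\otimes\rho_B)$, invokes a continuity bound $|p_{\text{suc}}(\rho_{AB})-p_{\text{suc}}(\rho_A\otimes\rho_B)|\leq\tfrac12\|\rho_{AB}-\rho_A\otimes\rho_B\|_1$ (proved in an appendix via the operational interpretation of trace distance: running the protocol is itself a distinguishing strategy), then chains Fuchs--van de Graaf and $S(\rho\|\sigma)\geq -2\log F(\rho,\sigma)$ exactly as you do, finally dropping terms of order $O(\epsilon^2,\epsilon\beta^n,\beta^{2n})$ to reach the stated bound. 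Your explicit construction of the POVM element $N$ is just a more concrete version of the paper's appendix lemma, so the ``genuine obstacle'' you flag is already handled there by the same idea you sketch.
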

\begin{proof}
We begin by recalling that the mutual information can be written as the relative entropy between $\rho_{AB}$ and the product state $\rho_A\otimes \rho_B$, 
\begin{align}
    I(A:B)_{\rho_{AB}} = S(\rho_{AB}||\rho_A\otimes \rho_B).
\end{align}
The relative entropy measures distinguishability of states, so the mutual information is a measure of how distinct $\rho_{AB}$ is from the product state $\rho_A \otimes \rho_B$. In Appendix \ref{appendix:relativeentropyandTrace}, we prove a lower bound on the distinguishability of resource states in terms of their corresponding success probabilities using the trace distance:
\begin{align}
    |p_{\text{suc}}(\rho_{AB}) - p_{\text{suc}}(\rho_{A}\otimes \rho_B)| \leq \frac{1}{2} ||\rho_{AB}-\rho_A\otimes \rho_B||_1.
\end{align}
The trace distance is related to the fidelity by
\begin{align}
    \frac{1}{2}||\rho-\sigma||_1 \leq \sqrt{1-F(\rho,\sigma)},
\end{align}
and the fidelity to the relative entropy by
\begin{align}
    S(\rho||\sigma) \geq - 2 \log F(\rho,\sigma).
\end{align}
Combining these inequalities gives
\begin{align}
    I(A:B) \geq - 2 \log\left[1 - |p_{\text{suc}}(\rho_{AB}) - p_{\text{suc}}(\rho_{A}\otimes \rho_B)|^2\right]
\end{align}
Now using $p_{\text{suc}}(\rho_{AB}) \geq 1 - \epsilon$ and the upper bound $p_{\text{suc}}(\rho_A \otimes \rho_B) \leq \beta^n$ from Lemma \ref{lemma:cryptoprobabilitybound}, we have
\begin{align}
    I(A:B) \geq - 2 \log\left[2(\epsilon + \beta^n)\right],
\end{align}
where we have dropped terms at order $O(\epsilon^2, \epsilon \beta^n, \beta^{2n}).$
\end{proof}

%%%%%%%%%%%%%%%%%%%%%%%%%%%%%%%%%%%%%%%%%%
\subsection{The connected wedge theorem}\label{sec:22connectedwedge}
%%%%%%%%%%%%%%%%%%%%%%%%%%%%%%%%%%%%%%%%%%

We will argue for Theorem \ref{thm:main}, the connected wedge theorem, by arranging a $\textbf{B}_{84}^{\times n}$ task on an asymptotically AdS$_{d+1}$ spacetime. We will then apply Lemma \ref{lemma:mutualInfoBound} to argue that boundary regions $V_1$ and $V_2$ must have mutual information at order $1/G_N$. To apply Lemma \ref{lemma:mutualInfoBound} to AdS/CFT, we first need to identify the systems $A$ and $B$ appearing in the lemma with subsystems of the CFT. In the context of the nonlocal procedures discussed in the last subsection, the important feature of the $A$ subsystem was that it was available for quantum computations occurring in the future of $c_1$ and the past of $r_1$ and $r_2$. Similarly, the system $B$ was available in the future of $c_2$ and past of $r_1$ and $r_2$. This indicates we should identify $A$ with $V_1$ and $B$ with $V_2$. 

%%%%%%%%%%%%%%%%%%%%%%%%%%%%%%%%%%%%%%%%%%%%%%%%%%%%
\begin{figure}
    \centering
    \begin{tikzpicture}[scale=0.75]
    
    %H1 and H2 diamonds
    \draw[thick, black, fill=black!60!,opacity=0.8] (-6,0) -- (-4,2) -- (-2,0) -- (-4,-2) -- (-6,0);
    
    \draw[thick, black, fill=black!60!,opacity=0.8] (6,0) -- (4,2) -- (2,0) -- (4,-2) -- (6,0);
    
    %box
    \draw[lightgray] (-10,8) -- (10,8) -- (10,-2) -- (-10,-2) -- (-10,8);
    
    %dashed lines to define regions
    \draw[dashed] (-6,0) -- (0,6);
    \draw[dashed] (6,0) -- (0,6);
    \draw[dashed] (-2,0) -- (-10,8);
    \draw[dashed] (2,0) -- (10,8);
    %\draw[dashed] (4,0) -- (3,-1) -- (2,0);
    %\draw[dashed] (-4,0) -- (-3,-1) -- (-2,0);

    %output arrows
    \draw[->] (0,6) --(0,7);
    \draw[->] (10,8) -- (10,9);
    \draw[->] (-10,8) -- (-10,9);
    
    %output systems
    \node[left] at (-0,7) {$b$};
    \node[left] at (10,9) {$b$};
    \node[right] at (-10,9) {$b$};
    
    %input points and labels
    \draw[fill=black] (-4,-2) circle (0.15);
    \draw[fill=black] (4,-2) circle (0.15);
    \node[below left] at (-4,-2) {$c_1$};
    \node[below right] at (4,-2) {$c_2$};
    
    %input arrows and systems
    \draw[->] (-4,-3) -- (-4,-2.18);
    \draw[->] (4,-3) -- (4,-2.18);
    \node[right] at (-4,-3) {$H^q\ket{b}$};
    \node[left] at (4,-3) {$q$};
    
        %%%%
    \draw[thick,fill=lightgray] (-2,0) -- (0,2) -- (2,0) -- (0,-2) -- (-2,0);
    \draw[thick,fill=lightgray] (-6,0) -- (-10,4) -- (-10,-2) -- (-8,-2) -- (-6,0);
    \draw[thick,fill=lightgray] (6,0) -- (10,4) -- (10,-2) -- (8,-2) -- (6,0);
    
    %arrows and nodes to label regions
    \node at (4,0) {$V_2$};
    \node at (-4,0) {$V_1$};

    \node at (0,0) {$X_1$};
    \node at (-8,0) {$X_2$};
    \node at (8,0) {$X_2$};
    
    %output points and labels
    \draw[fill=blue] (10,8) circle (0.15);
    \draw[fill=blue] (-10,8) circle (0.15);
    \draw[fill=blue] (0,6) circle (0.15);
    \node[above left] at (-10,8) {$r_2$};
    \node[above right] at (10,8) {$r_2$};
    \node[above right] at (0,6) {$r_1$}; 
    
    \end{tikzpicture}
    \caption{Boundary view of the $\textbf{B}_{84}$ task. Input regions $V_1$ and $V_2$ are shown in dark gray. The regions $X_i$, shown in light gray, sit between the input regions and add some complications to the boundary proof of Theorem \ref{thm:main}.}
    \label{fig:BB84taskWithRegions}
\end{figure}
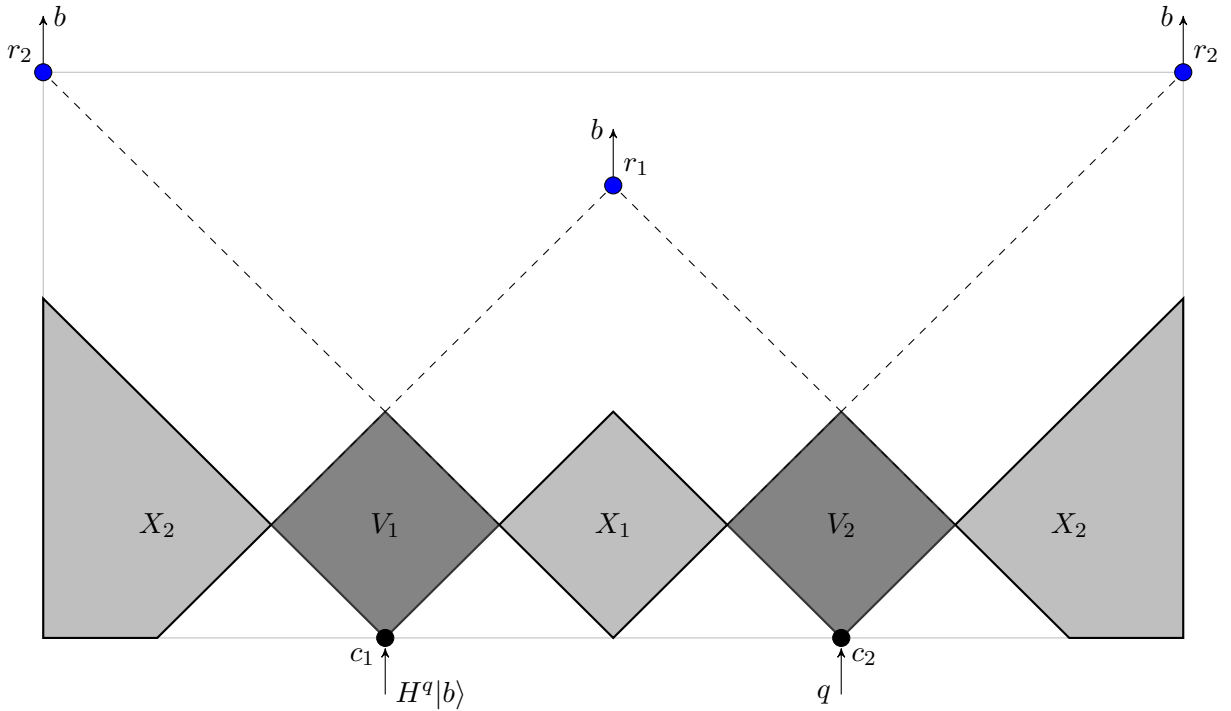
%%%%%%%%%%%%%%%%%%%%%%%%%%%%%%%%%%%%%%%%%%%%%%%%%%%%
 
There is a complication however in applying Lemma \ref{lemma:mutualInfoBound} to the CFT. This is because in addition to the input region systems $V_1V_2$, there is also a system $X$ that sits between the input regions. Potentially, this can act as an additional resource with which to complete the task. To be precise, we define the complementary domain of dependence $X$ by\footnote{Here the overline denotes set closure.}
\begin{align} \label{eq:complementary-DOD}
    X = \bar{[\hat{J}^+(V_1 \cup V_2) \cup \hat{J}^-(V_1 \cup V_2)]^c}.
\end{align}
In $(2+1)$ bulk dimensions, this region splits into two disconnected components $X_1$ and $X_2$ as seen in Figure \ref{fig:BB84taskWithRegions}. This issue was addressed incorrectly in the argument for the connected wedge theorem appearing in \cite{may2019quantum}, and the existence of the region $X$ remains the main obstruction to providing a complete proof of Theorem \ref{thm:main} using quantum information constraints on the boundary. 

In fact, in appendix \ref{sec:22appendix}, we provide an example of a protocol that exploits the region $X$ to complete the $\textbf{B}_{84}$ task nonlocally with zero mutual information between the input regions. Effectively, entanglement between the input regions is `hidden' by applying a random choice of unitary matrix to the qubit in region $V_2$, with the classical information describing which unitary was chosen stored in both $X_1$ and $X_2$. The application of a random unitary destroys the entanglement between $V_1$ and $V_2$, but the `hidden' entanglement in $X_1$ and $X_2$ is still accessible by the output points.

We do not believe that this loophole is relevant in the context of holography. The large amount of classical correlation between $X_1$ and $X_2$ needed to replace quantum correlations between $V_1$ and $V_2$ requires, in turn, a large amount of GHZ-type entanglement between the three regions $X_1$, $X_2$ and $V_1 \cup V_2$. Such patterns of correlation are known not to exist in stabilizer tensor networks \cite{Nezami:2016zni}, and we do not expect them to exist in holography. Instead, when the entanglement wedge is disconnected, the leading-order four-party entanglement structure is expected to be a combination of: (a) (divergent) bipartite entanglement between neighboring regions, (b) bipartite entanglement between $X_1$ and $X_2$, and (c) four-party perfect-tensor entanglement \cite{Cui:2018dyq}. The first two possibilities cannot help, since any system in $X_i$ may be moved forward in time into the future light cones of the $V_i$ without lowering the maximum success probability of the protocol. This means entanglement between the $X_i$ can be replaced with entanglement created locally within one of the $V_i$ and then distributed, and similarly for entanglement between $V_i$ and $X_i$. We believe that the third also cannot help, because there is no information accessible to both $X_1$ and $X_2$ that can be used to extract entanglement between $V_1$ and $V_2$. As far as we know, however, this statement has not been proved; as such, four-party perfect tensor entanglement therefore cannot be completely ruled out as a resource for the boundary theory to complete the $\mathbf{B}_{84}$ task without large mutual information between $V_1$ and $V_2$.\footnote{We stress that this problem only affects the quantum information argument of this section; the bulk proof of Theorem \ref{thm:main} that we give in Section \ref{sec:GR} is unaffected}.

Finally we are ready to give the quantum information argument for the connected wedge theorem from the perspective of quantum tasks. We restate the theorem here.
\vspace{0.3cm}\\
\noindent \textbf{Theorem} \ref{thm:main} \emph{
    Let $\{c_1, c_2, r_1, r_2\}$ be a bulk-only scattering configuration on the boundary of an asymptotically AdS spacetime with a holographic dual. Let $V_1$ and $V_2$ be boundary regions defined by
    \begin{equation}
        V_1 = \hat{J}_{1 \rightarrow 12} \quad \text{and} \quad V_2 = \hat{J}_{2 \rightarrow 12}.
    \end{equation}
    Then $V_1$ and $V_2$ are spacelike-separated domains of dependence, and the entanglement wedge of $V_1 \cup V_2$ is connected.
}
\vspace{0.3cm}\\
\begin{argument} Since $J_{12\rightarrow 12}$ is nonempty, the $\textbf{B}_{84}^{\times n}$ task can be completed in the bulk with high success probability, say $p_{\text{suc}}\geq1-\epsilon$. We should be careful though to specify exactly how large $n$ can be taken, and how small $\epsilon$ is. In particular we will be interested in how $\epsilon$ and $n$ behave with $G_N$. We claim that in the limit $G_N\rightarrow 0$, we may take
\begin{align}
    \epsilon &\rightarrow 0, \nonumber \\
    n &\rightarrow \infty.
\end{align}
To see $\epsilon \rightarrow 0$, consider that on a completely classical and fixed background we expect that the bulk computation, applying a Hadamard $H$ and measuring in the computational basis, can be completed perfectly. Since we obtain a classical, fixed, background when $G_N\rightarrow 0$, we expect $\epsilon \rightarrow 0$ in this limit. To argue $n\rightarrow \infty$, note that the constraint on $n$ is that we must avoid sending so many qubits into the bulk that we backreact and close off the region $J_{12\rightarrow 12}$. This requires
\begin{align}
    n(\Delta E) < O(1/G_N),
\end{align}
where $\Delta E$ is the energy carried by each qubit, since in Einstein's equations the energy $n (\Delta E)$ couples to the geometry with a factor of $G_N$. This shows we can take $n\rightarrow \infty$ so long as $n$ grows more slowly than $1/G_N$.

From the bulk perspective we have that the $\mathbf{B}_{84}^{\times n}$ task can be completed with $p_{\text{suc}}\geq 1-\epsilon$, where $\epsilon \rightarrow 0$ and $n\rightarrow \infty$ as $G_N\rightarrow 0$. The AdS/CFT correspondence then tells us that the same must be true of the boundary task. Under the assumptions explained above, we may apply Lemma \ref{lemma:mutualInfoBound} to obtain
\begin{align}\label{eq:prooflowerbound}
    \frac{1}{2}I(V_1:V_2) \geq -\log[2(\epsilon + \beta^n)].
\end{align}
Since $\epsilon\rightarrow 0$ and $n\rightarrow \infty$ as $G_N\rightarrow 0$, the right-hand side grows as $G_N\rightarrow 0$. Consequently the mutual information $I(V_1:V_2)$ is strictly larger than $O(1)$. Since the HRRT formula \eqref{eq:HRRT} implies that the mutual information can only have an $O(1/G_N)$ piece and an $O(1)$ piece, the mutual information must be $O(1/G_N)$ and so the entanglement wedge must be connected.
\end{argument}

In this argument, we appeal to the HRRT formula to show that $I(V_1 : V_2)$ growing faster than $O(1)$ in the $G_N \rightarrow 0$ limit implies it must be $O(1/G_N).$ It is interesting to understand what would be required of the tasks argument to show that $I(V_1:V_2)=O(1/G_N)$ without using the HRRT formula. It would suffice to take $\epsilon = O(e^{-1/G_N})$, since also we can take $n=O(1/G_N^\alpha)$ for $\alpha\in [0,1)$ without backreacting. Then the lower bound \eqref{eq:prooflowerbound} becomes
\begin{align}
    I(V_1:V_2) \geq O(1/G_N^\alpha),
\end{align}
for any $\alpha \in [0,1)$, from which we can conclude $I(V_1:V_2)=O(1/G_N)$.\footnote{In \cite{may2019quantum}, where the connected wedge theorem was proved initially, $\epsilon=0$ was implicitly assumed. As we see here, that proof is not robust to allowing errors larger than $O(e^{-1/G_N})$, so the argument there actually requires a strong assumption on the error.} We see that $\epsilon=O(e^{-1/G_N})$ is strong enough to reproduce the conclusion from the HRRT formula that the mutual information is order $1/G_N$. While $\epsilon=O(e^{-1/G_N})$ is small, it is exactly the size of correction that one would expect from nonperturbative instanton effects associated to subleading bulk saddles. For example, the leading errors in entanglement wedge reconstruction come from subleading bulk saddle points and scale in exactly this way.\footnote{For information-theoretic arguments that suggest that this should generally be true, see \cite{Hayden:2018khn, Akers:2019wxj}. For explicit calculations of these errors in a very simple toy model, see \cite{Penington:2019kki}.}

Note that the proof techniques of this section are all time-reversal invariant. One could equally well formulate a ``past-directed'' $\mathbf{B}_{84}$ task, where one is given the inputs at $r_1$ and $r_2$, allowed to propagate signals backwards through time, and tasked with returning the outputs at $c_1$ and $c_2$ This implies that the output regions $\hat{J}_{1 \rightarrow 12}$ and $\hat{J}_{2 \rightarrow 12}$ have a connected entanglement wedge as well.

%%%%%%%%%%%%%%%%%%%%%%%%%%%%%%%%%%%%%%%%%%%%%%%%%%%%%%%%%%%
\section{Classical proof from the holographic dictionary}
\label{sec:GR}
%%%%%%%%%%%%%%%%%%%%%%%%%%%%%%%%%%%%%%%%%%

In this section, we provide a proof of Theorem \ref{thm:main} in the bulk using classical general relativity. Our main tools will be (i) the HRRT formula \eqref{eq:HRRT}, (ii) the focusing theorem, and (iii) the maximin prescription for HRRT surfaces \cite{maximin}. While the focusing theorem holds only for spacetimes satisfying the null energy condition, all proofs in this section can be generalized to semiclassical spacetimes with quantum matter by using generalized entropy instead of area and quantum extremal surfaces \cite{QES} instead of extremal surfaces in the HRRT formula, and assuming both the quantum focusing conjecture \cite{QFC} and quantum maximin \cite{quantum-maximin}.

The usual pedagogy of the focusing theorem first defines the expansion of a null sheet as the fractional rate of change in the area of its cross-sections, then shows that this expansion is non-positive on particular null sheets of interest. In subsection \ref{sec:area-theorems}, we present a different framing of the focusing theorem, which places Stokes' theorem and the geometry of the null sheet at the center of the discussion. This perspective is useful for understanding Theorem \ref{thm:main}, as it allows us to see explicitly how the extremal surfaces $\mathcal{E}_{V_1}^{\text{min}}$ and $\mathcal{E}_{V_2}^{\text{min}}$ interface with the bulk scattering region $J_{12\rightarrow 12}$; the extremal surfaces bound a null membrane that extends deep into the bulk, allowing the extremal surfaces to probe the scattering region through Stokes' theorem. The details of this null membrane construction and the geometric proof of Theorem \ref{thm:main} are provided in subsection \ref{sec:null-membrane}.

We assume throughout this section that (i) the bulk spacetime satisfies the null energy condition, and (ii) the HRRT surface $\mathcal{E}_{V_1 \cup V_2}^{\text{min}}$ can be found by a maximin procedure.\footnote{In \cite{maximin}, it was proven that the maximin surface is equivalent to the HRRT surface whenever the maximin surface exists. It was argued there and in \cite{maximin2} that maximin surfaces exist for generic boundary domains of dependence in a large class of spacetimes.} We will also assume (iii) that the spacetime is \emph{AdS-hyperbolic}, i.e., that the ``unphysical spacetime'' consisting of the spacetime together with its conformal boundary admits a Cauchy slice.\footnote{A spacetime having a Cauchy slice is equivalent to the property that for any two spacetime points $p$ and $q$, the causal region $J^+(p) \cap J^-(q)$ is compact \cite{gerochDOD}. This essentially requires that there are no ``holes'' or secret asymptotic regions in the bulk; in the case of a spacetime with multiple asymptotically AdS regions, such as a multiboundary wormhole, the spacetime together with all conformal boundaries is still globally hyperbolic.} As in \cite{maximin}, we will also assume that spacelike slices of any conformal compactification are themselves compact. We work in an arbitrary number of bulk dimensions in subsection \ref{sec:area-theorems}, and specialize to $(2+1)$ dimensions in subsection \ref{sec:null-membrane} for ease of visualization; however, the proof techniques of subsection \ref{sec:null-membrane} are not dimension-dependent.

\subsection{Area theorems for null surfaces}
\label{sec:area-theorems}

Let $\mathcal{N}$ be a codimension-$1$ null surface in a $(d+1)$-dimensional spacetime. Further assume that $\del \mathcal{N}$ consists of two spacelike surfaces, $\Sigma_0$ and $\Sigma_1$, such that $\Sigma_1$ is in the future of $\Sigma_0$. Each of these surfaces is codimension-$2$, and has a $(d-1)$-index volume form $\bm{\tilde{\epsilon}}_{a_1 \dots a_{d-1}}.$ The difference in area between the two surfaces is given by
\begin{equation}
    \text{area}(\Sigma_2) - \text{area}(\Sigma_1)
        = \int_{\Sigma_2} \bm{\tilde{\epsilon}}_{a_1 \dots a_{d-1}}
            - \int_{\Sigma_1} \bm{\tilde{\epsilon}}_{a_1 \dots a_{d-1}},
\end{equation}
which by Stokes' theorem may be computed as
\begin{equation} \label{eq:stokes-interior}
    \text{area}(\Sigma_2) - \text{area}(\Sigma_1)
        = \int_\mathcal{N} (\dd\bm{\tilde{\epsilon}})_{a_1 \dots a_{d}}
\end{equation}
for any smooth extension of the boundary volume form $\bm{\tilde{\epsilon}}_{a_1 \dots a_{d-1}}$ to the interior of $\mathcal{N}$. Suppose further that $\mathcal{N}$ is generated by null geodesics, i.e., that each point $p \in \mathcal{N}$ lies on some future-directed null geodesic $\gamma$ that lies entirely within $\mathcal{N}$ between $\Sigma_0$ and $p$. Choosing an affine parameter $\lambda$ for these null geodesics gives (i) an affinely parametrized tangent vector $k^a$ satisfying $k^b \Del_b k^a = 0$, and (ii) a preferred spacelike foliation of $\mathcal{N}$ given by the surfaces of constant $\lambda$. If this affine parameter is chosen such that the future boundary $\Sigma_{1}$ is the surface $\lambda = 1$, then the corresponding spacelike foliation of $\mathcal{N}$ gives a preferred smooth extension of $\bm{\tilde{\epsilon}}_{a_1 \dots a_{d-1}}$ to the null surface $\mathcal{N}$ by taking $\bm{\tilde{\epsilon}}_{a_1 \dots a_{d-1}}$ on each surface of constant $\lambda$ to be the volume form on that surface. We will take this to be the definition of the $(d-1)$-form $\bm{\tilde{\epsilon}}_{a_1 \dots a_{d-1}}$ that appears in equation \eqref{eq:stokes-interior}. Our goal is now to show that $(\dd\bm{\tilde{\epsilon}})_{a_1 \dots a_{d}}$ is proportional to the expansion of $\mathcal{N}$ with respect to $\lambda$, thus bringing equation \eqref{eq:stokes-interior} in contact with the familiar formulation of the focusing theorem.

Let $\bm{\epsilon}_{a_1 \dots a_{d}}$ be a volume form for the null surface $\mathcal{N}.$ There is no uniquely preferred volume form on $\mathcal{N}$, since the metric on null surfaces is degenerate, but we may specify $\bm{\epsilon}_{a_1 \dots a_{d}}$ by noting that $\bm{\epsilon}_{a_1 \dots a_d}$ is a top-level form\footnote{By ``top-level form'', we mean a $k$-form on a surface with a $k$-dimensional tangent space. On any surface, the space of top-level forms is one-dimensional.} on $\mathcal{N}$ and hence unique up to multiplication by a smooth function. We fix that smooth function by requiring $\bm{\epsilon}_{a_1 \dots a_d}$ to satisfy
\begin{equation} \label{eq:null-normalization}
    k^{a_1} \bm{\tilde{\epsilon}}^{a_2 \dots a_d} \bm{\epsilon}_{a_1 \dots a_d} = \bm{\tilde{\epsilon}}^{a_2 \dots a_d} \bm{\tilde{\epsilon}}_{a_2 \dots a_d} = (d-1)!.
\end{equation}
Note that $\bm{\epsilon}_{a_1 \dots a_d}$ and $(\dd\bm{\tilde{\epsilon}})_{a_1 \dots a_{d}}$ are both top-level forms on $\mathcal{N}$, so they must satisfy an equation of the form
\begin{equation}
    (\dd\bm{\tilde{\epsilon}})_{a_1 \dots a_{d}}
        = \theta \bm{\epsilon}_{a_1 \dots a_d}.
\end{equation}
The function $\theta$ can be computed using equation \eqref{eq:null-normalization} as
\begin{equation} \label{eq:alpha-prelim}
    \theta (d-1)!
        = k^{a_1} \bm{\tilde{\epsilon}}^{a_2 \dots a_d} (\dd\bm{\tilde{\epsilon}})_{a_1 \dots a_{d}}.
\end{equation}
The right-hand side of this expression can be computed by noting that $\bm{\tilde{\epsilon}}$ is given by $\iota_{k} \bm{\epsilon}$, i.e., by contracting $k^a$ into the first index of $\bm{\epsilon}_{a_1 \dots a_d}$. This follows from the relationship between $\bm{\tilde{\epsilon}}_{a_1 \dots a_{d-1}}$ and $\bm{\epsilon}_{a_1 \dots a_d}$ given in equation \eqref{eq:null-normalization}, together with the fact that $\bm{\tilde{\epsilon}}_{a_1 \dots a_{d-1}}$ is a top-level form on the spacelike cross-sections of $\mathcal{N}$. This observation, together with the Lie derivative formula
\begin{equation}
    \mathscr{L}_{k} (\bm{\epsilon}) = \iota_k \dd(\bm{\epsilon}) + \dd(\iota_k \bm{\epsilon}),
\end{equation}
allows us to rewrite equation \eqref{eq:alpha-prelim} as 
\begin{equation} \label{eq:alpha-mid}
    \theta (d-1)!
        = k^{a_1} \bm{\tilde{\epsilon}}^{a_2 \dots a_d} \left[ \mathscr{L}_{k}(\bm{\epsilon}_{a_1 \dots a_d}) - k^b \dd(\bm{\epsilon})_{b a_1 \dots a_{d}} \right] = k^{a_1} \bm{\tilde{\epsilon}}^{a_2 \dots a_d} \mathscr{L}_{k}(\bm{\epsilon}_{a_1 \dots a_d}).
\end{equation}
The second equality follows from the fact that $\bm{\epsilon}$ is a top-level form and hence closed on $\mathcal{N}.$

The Lie derivative appearing in equation \eqref{eq:alpha-mid} can be expanded explicitly as 
\begin{equation}
    \mathscr{L}_{k}(\bm{\epsilon}_{a_1 \dots a_d})
        = k^b \Del_b \bm{\epsilon}_{a_1 \dots a_d}
            + \bm{\epsilon}_{b \dots a_d} \Del_{a_1} k^b
            + \bm{\epsilon}_{a_1 b \dots a_d} \Del_{a_2} k^b
            + \dots
            + \bm{\epsilon}_{a_1 \dots b} \Del_{a_d} k^b.
\end{equation}
Plugging this back into equation \eqref{eq:alpha-mid}, one finds that (i) the first term vanishes due to the relations $k^b \Del_b k^a = 0$ and $\bm{\tilde{\epsilon}}^{a_2 \dots a_d} \bm{\tilde{\epsilon}}_{a_2 \dots a_d} = (d-1)!$, and (ii) the second term vanishes due to the relation $k^b \Del_b k^a = 0$. We are left with the equation
\begin{equation}
    \theta (d-1)!
        = k^{a_1} \bm{\tilde{\epsilon}}^{a_2 \dots a_d} \left( \bm{\epsilon}_{a_1 b \dots a_d} \Del_{a_2} k^b
            + \dots
            + \bm{\epsilon}_{a_1 \dots b} \Del_{a_d} k^b \right).
\end{equation}
Each term in this equation can be computed using the relations
\begin{equation}
    k^{a_1} \bm{\epsilon}_{a_1 \dots a_d} = \bm{\tilde{\epsilon}}_{a_2 \dots a_d} \qquad \text{and} \qquad \bm{\tilde{\epsilon}}^{b a_{3} \dots a_d} \bm{\tilde{\epsilon}}_{c a_3 \dots a_d} = h^{b}{}_{c} (d-2)!,
\end{equation}
where $h_{bc}$ is the projection onto the leaves of the foliation, i.e., the pullback of the metric to the spacelike cross-sections of $\mathcal{N}$. The resulting expression is
\begin{equation} \label{eq:expansion}
    \theta
        = h^{ab} \Del_{a} k_b,
\end{equation}
which is the usual definition of the expansion of $\mathcal{N}$ with respect to the generator $k^a$. The final result of this calculation is that the difference in area between two spacelike slices of a null surface is given by the integral of the expansion over the interior, i.e., by
\begin{equation} \label{eq:stokes-area}
    \text{area}(\Sigma_2) - \text{area}(\Sigma_1)
        = \int_{\mathcal{N}} \theta \bm{\epsilon}_{a_1 \dots a_{d}}
        = \int_{\mathcal{N}} (h^{ab} \Del_{a} k_b) \bm{\epsilon}_{a_1 \dots a_{d}}.
\end{equation}

The expansion $\theta$ of a congruence of affinely parametrized geodesics is governed by the Raychaudhuri equation \cite{raychaudhuri},
\begin{equation} \label{eq:raychaudhuri}
    \frac{d\theta}{d\lambda} = - \frac{\theta^2}{d-1} - \sigma_{ab} \sigma^{ab} + \omega_{ab} \omega^{ab} - R_{ab} k^a k^b.
\end{equation}
Here the \emph{shear} $\sigma_{ab}$ and the \emph{twist} $\omega_{ab}$ are defined as the symmetric traceless and antisymmetric parts of the tensor
\begin{equation}
    B_{ab} = h_{a}{}^{d} h_{b}{}^{c} \Del_{c} k_d,
\end{equation}
respectively. In conjunction with equation \eqref{eq:stokes-area}, the Raychaudhuri equation constrains the difference in area between two cross-sections of a null surface generated by geodesics.

There are two kinds of null surfaces whose expansions are particularly constrained by the Raychaudhuri equation: lightsheets of extremal surfaces, and causal horizons. The first of these is defined as the boundary of the future (or past) of an extremal surface $\Sigma$: $\mathcal{N}^{\pm} = \del J^{\pm}(\Sigma).$\footnote{Surfaces of this type have been studied by many authors, and were given the name ``lightsheets'' in \cite{boussobound}.} Such a surface is generated by null geodesics that leave $\Sigma$ orthogonally;\footnote{This is guaranteed by the AdS-hyperbolicity of the spacetime.} with respect to any affine parametrization of these geodesics, $\mathcal{N}^{\pm}$ has everywhere non-positive expansion. This follows from the following facts: (i) the twist $\omega_{ab}$ vanishes on $\mathcal{N}^{\pm}$ since the geodesics are hypersurface-orthogonal,\footnote{See Chapter 9 of \cite{waldbook} for a review of this statement.} (ii) the term $(- R_{ab} k^a k^b)$ is non-positive by the null energy condition, (iii) the expansion vanishes on $\Sigma$ by its extremality, and (iv) any geodesic whose expansion passes through $-\infty$ leaves $\del J^+(\Sigma)$.\footnote{A nice discussion of this property of the boundary of the future of a set appears in Section 5 of \cite{wittensingularities}.} Conditions (i) and (ii) together imply the equation $d \theta / d \lambda \leq 0$; this does not assume the extremality of $\Sigma$, and is what is often called the \emph{focusing theorem}. Conditions (iii) and (iv) then imply that the expansion of $\mathcal{N}^{\pm}$ is non-positive, since the expansion can only become positive by passing through $\theta = - \infty$, at which point the relevant geodesic leaves $\mathcal{N}^{\pm}$.

The other kind of null surface relevant to the proof of Theorem \ref{thm:main} is the \emph{causal horizon}, i.e., the boundary of the past of a point at infinity. For concreteness, let $S$ be a set of points at the spacetime boundary. Then $\del J^{-}(S)$ is generated by past-directed null geodesics that go to the spacetime boundary in the affine parameter limit $\lambda \rightarrow -\infty$. The expansion of these geodesics with respect to any affine parameter must be non-positive, for if the expansion is positive at some point $p$ with affine parameter $\lambda_p$, then the Raychaudhuri equation \eqref{eq:raychaudhuri} implies that the expansion must have been $+\infty$ at some point between $\lambda_p$ and the ``earlier'' affine parameter
\begin{equation}
    \lambda = \lambda_p - \frac{d-1}{\theta_p}.
\end{equation}
This contradicts the assumption that $p$ is in the boundary of the past of $S$ by point (iv) of the preceding paragraph. We conclude that the expansion of any causal horizon must be non-positive with respect to past-directed generators. The equivalent statement that the expansion of a causal horizon must be non-negative with respect to future-directed generators is Hawking's area theorem.

That the expansion is non-positive for future-directed extremal lightsheets and past-directed causal horizons, and that the area difference between cross-sections of a null surface equals the surface integral of the expansion, are the critical ingredients in proving Theorem \ref{thm:main}. We provide the proof in the following subsection.

\subsection{Proof via the null membrane}
\label{sec:null-membrane}

Suppose now that $(\mathcal{M}, g_{ab})$ is an asymptotically AdS$_{2+1}$ spacetime with a bulk-only scattering configuration $\{c_1, c_2, r_1, r_2\}$. Let $V_1 = \hat{J}_{1 \rightarrow 12}$ and $V_2 = \hat{J}_{2 \rightarrow 12}$ be the boundary domains of dependence defined in Theorem \ref{thm:main}. Let $\mathcal{E}_{V_1}^{\text{min}}$ and $\mathcal{E}_{V_2}^{\text{min}}$ be the HRRT surfaces of $V_1$ and $V_2$, respectively. Our goal is to show that the joint HRRT surface $\mathcal{E}_{V_1 \cup V_2}^{\text{min}}$ cannot equal $\mathcal{E}_{V_1}^{\text{min}} \cup \mathcal{E}_{V_2}^{\text{min}}.$ This implies that the entanglement wedge is connected, and hence proves Theorem \ref{thm:main}. We proceed by constructing, on any complete achronal slice containing $\mathcal{E}_{V_1}^{\text{min}} \cup \mathcal{E}_{V_2}^{\text{min}}$, a topologically distinct surface whose area is smaller than that of $\mathcal{E}_{V_1}^{\text{min}} \cup \mathcal{E}_{V_2}^{\text{min}}.$ This contradicts the maximin procedure \cite{maximin}, which implies that the true HRRT surface of $V_1 \cup V_2$ must be globally minimal on some complete achronal slice.

We begin by defining, for any complete achronal slice $\Sigma$ containing $\mathcal{E}_{V_1}^{\text{min}} \cup \mathcal{E}_{V_2}^{\text{min}}$, the \emph{null membrane} $\mathcal{N}_{\Sigma}$. $\mathcal{N}_{\Sigma}$ is a codimension-$1$ null surface in the bulk constructed from two distinct pieces. The first piece, which we call the \emph{lift}\footnote{The terms \emph{lift}, \emph{ridge}, and \emph{slope} used to define the null membrane are borrowed from skiing terminology for reasons that we hope will become obvious.}, is given by
\begin{equation} \label{eq:lift}
    \mathcal{L} = \del J_{\text{in}}^+(\mathcal{E}_{V_1}^{\text{min}} \cup \mathcal{E}_{V_2}^{\text{min}}) \cap J^{-}(r_1) \cap J^{-}(r_2).
\end{equation}
This surface consists of null geodesics on the boundary of the future of $\mathcal{E}_{V_1}^{\text{min}} \cup \mathcal{E}_{V_2}^{\text{min}}$ that lie in the past of both $r_1$ and $r_2$. Note that we restrict to those null geodesics in the ``inward-pointing'' direction, i.e., those that move initially away from the boundary domains of dependence. The lift $\mathcal{L}$ consists of portions of the lightsheets of $\mathcal{E}_{V_1}^{\text{min}}$ and $\mathcal{E}_{V_2}^{\text{min}}$ up until the points where those lightsheets meet, at which point $\mathcal{L}$ caps off with a spacelike  \emph{ridge}, $\mathcal{R}$. A surface of this kind is sketched in Figure \ref{fig:lift}. Note that the lift is independent of $\Sigma$.

%%%%%%%%%%%%%%%%%%%%%%%%%%%%%%%%%%%%%%%%%%%%%
\begin{figure}
    \centering
    \subfloat[\label{fig:lift}]{
    \includegraphics[scale=1.1]{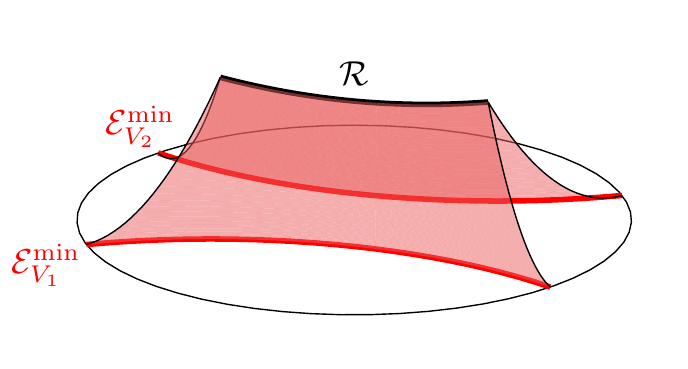}
    }
    \subfloat[\label{fig:slope}]{
    \includegraphics[scale=1.1]{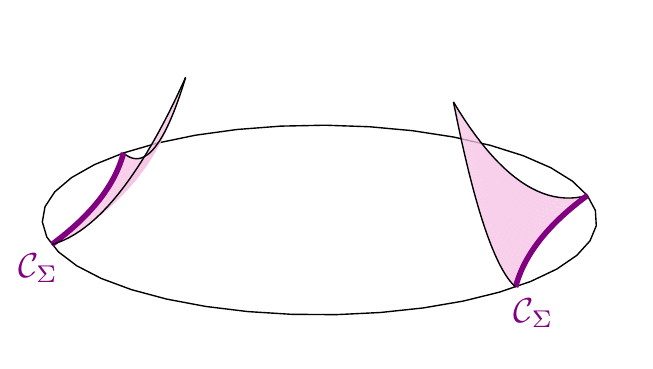}
    }
    \caption{(a) The lift $\mathcal{L}$ generated by two HRRT surfaces $\mathcal{E}_{V_1}^{\text{min}}$ and $\mathcal{E}_{V_2}^{\text{min}}$ (cf. eq.\eqref{eq:lift}). The HRRT surfaces and the ridge $\mathcal{R}$ are marked with heavy curves. The boundary circle represents a complete achronal slice $\Sigma$ containing the HRRT surfaces, though the definition of the lift is $\Sigma$-independent. (b) The slope $\mathcal{S}_{\Sigma}$ terminating on a complete achronal slice $\Sigma$ containing the HRRT surfaces. The contradiction surface $\mathcal{C}_{\Sigma}$ is marked with heavy curves. (Cf. eq.\eqref{eq:slope}.)}
    \label{fig:lift-slope}
\end{figure}
%%%%%%%%%%%%%%%%%%%%%%%%%%%%%%%%%%%%%%%%%%%%%

The second piece of the null membrane, which we call the \emph{slope}, is defined by
\begin{equation} \label{eq:slope}
    \mathcal{S}_{\Sigma} = \del[J^-(r_1) \cap J^-(r_2)] \cap J^-[\del J_{\text{in}}^+(\mathcal{E}_{V_1}^{\text{min}}\cup\mathcal{E}_{V_2}^{\text{min}})] \cap J^+(\Sigma).
\end{equation}
This surface consists of all those points on the past lightsheets of $r_1$ and $r_2$ that (i) lie in the past of the points where the lightsheets meet, (ii) lie in the past of the lightsheets of the minimal surfaces of $V_1$ and $V_2$, and (iii) lie in the future of $\Sigma$. A surface of this type is sketched in Figure \ref{fig:slope}. The null membrane is defined to be the union of these two surfaces, $\mathcal{N}_{\Sigma} = \mathcal{L} \cup \mathcal{S}_{\Sigma}$. For a bulk-only scattering configuration, the null membrane generically has a topological structure like that shown in Figure \ref{fig:null-membrane}.

%%%%%%%%%%%%%%%%%%%%%%%%%%%%%%%%%%%%%%%%%%%%%
\begin{figure}
    \centering
    \hspace{-0.5cm}
    \subfloat[\label{fig:null-membrane}]{
    \raisebox{30pt}{\includegraphics[scale=0.9]{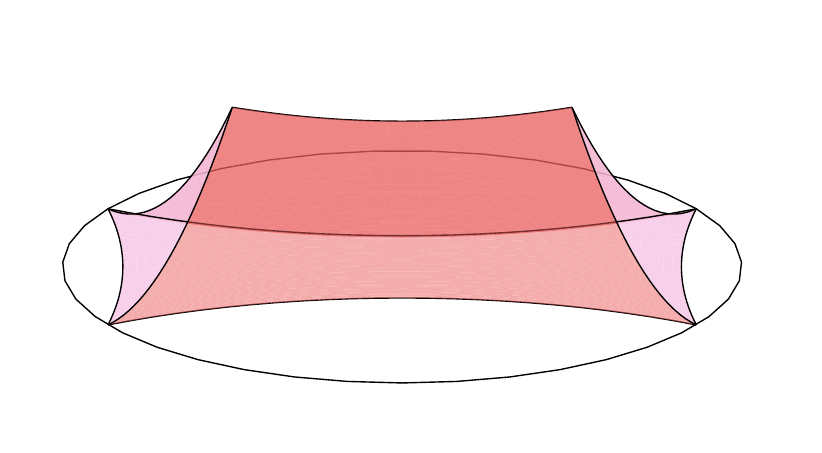}}
    }
    \hspace{10pt}
    \subfloat[\label{fig:membrane-CS}]{
    \includegraphics[scale=0.9]{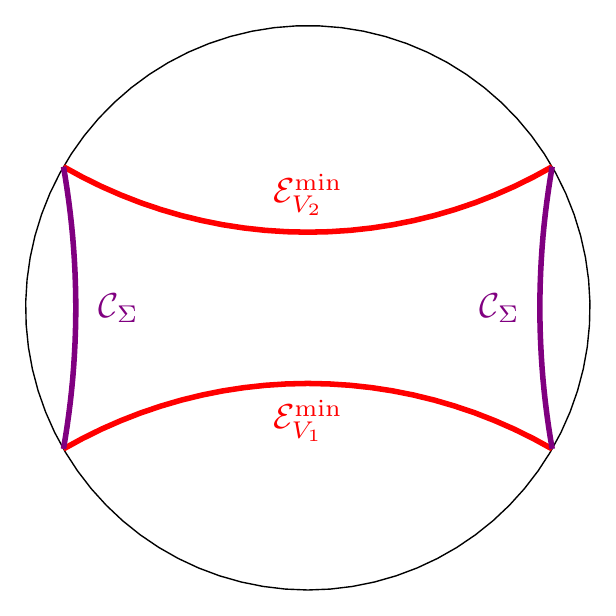}
    }
    \caption{(a) An example of a null membrane $\mathcal{N}_{\Sigma}$ for a bulk-only scattering configuration. The null membrane generically has a spacelike ridge as its future boundary and four spacelike seams where the lift and slope meet; more on this in Figure \ref{fig:cusps}. (b) The past boundary of $\mathcal{N}_{\Sigma}$ is topologically $S^1$ on any complete achronal slice containing the HRRT surfaces of $V_1$ and $V_2$.}
    \label{fig:membrane-zoom}
\end{figure}
%%%%%%%%%%%%%%%%%%%%%%%%%%%%%%%%%%%%%%%%%%%%%

The null membrane has several important properties. First, and most importantly, its boundary on $\Sigma$ is nonempty and topologically $S^1$ --- see Figure \ref{fig:membrane-CS} for an example. That the intersections of $\del J^{-}(r_1)$ and $\del J^{-}(r_2)$ with $\Sigma$ must be made up of curves terminating on the spacelike boundaries of $V_1$ and $V_2$ follows from the fact that the spacelike boundaries of $V_1$ and $V_2$ are \emph{defined} to be points on the past lightsheets of $r_1$ and $r_2.$ That these curves must be connected follows from AdS-hyperbolicity. That these curves must exist follows from the fact that the past lightsheets of $r_1$ and $r_2$ on the boundary are null surfaces; the only way these curves can fail to exist is if the lightsheets have already hit the spacetime boundary and disappeared, but this cannot happen on any achronal slice containing the spacelike boundary points of $V_1$ and $V_2$. We label the past boundary of $\mathcal{S}_{\Sigma}$ the \emph{contradiction surface} $\mathcal{C}_{\Sigma}.$

The other important property of the null membrane is that its cusps have a simple structure. This structure is sketched in Figure \ref{fig:cusps}. As mentioned previously, the lift $\mathcal{L}$ has a simple spacelike ridge $\mathcal{R}$ where the lightsheets of $\mathcal{E}_{V_1}^{\text{min}}$ and $\mathcal{E}_{V_2}^{\text{min}}$ meet. That this ridge is nonempty follows from the assumption that the bulk scattering region $J_{12 \rightarrow 12}$ is nonempty. That assumption implies that the lightsheets of $V_1$ and $V_2$ meet in the past of $r_1$ and $r_2$. This in turn implies that the lightsheets of $\mathcal{E}_{V_1}^{\text{min}}$ and $\mathcal{E}_{V_2}^{\text{min}}$ meet in the past of $r_1$ and $r_2$ by the property that the HRRT surface of $V_i$ must always lie spacelike outside the spacelike boundary of its causal wedge \cite{maximin}. Furthermore, the lift $\mathcal{L}$ and slope $\mathcal{S}_{\Sigma}$ meet at a set of four spacelike cusps, which we label $\mathcal{A}$. That these cusps exist follows from the assumption that the scattering configuration is bulk-only --- the past lightsheets of $r_1$ and $r_2$ cannot hit the boundary after intersecting and before passing through $\mathcal{R}$, otherwise that point would be in the boundary scattering region $\hat{J}_{12 \rightarrow 12}.$\footnote{Recall, as explained in the introduction, that Theorem \ref{thm:main} is trivially true when the scattering configuration admits a boundary scattering point.} Finally, there may be cusps within $\mathcal{L}$ and $\mathcal{R}$ that arise from null generators colliding and leaving the lightsheet. These cusps extend down from the ridge along $\mathcal{L}$ and up from the contradiction surface $\mathcal{C}_{\Sigma}$ along $\mathcal{S}_{\Sigma}$. We label these cusps $\mathcal{B}_{\mathcal{L}}$ and $\mathcal{B}_{\mathcal{S}_{\Sigma}}$ in Figure \ref{fig:cusps}.

%%%%%%%%%%%%%%%%%%%%%%%%%%%%%%%%%%%%%%%%%%%%%
\begin{figure}
    \centering
    \includegraphics[scale=1.3]{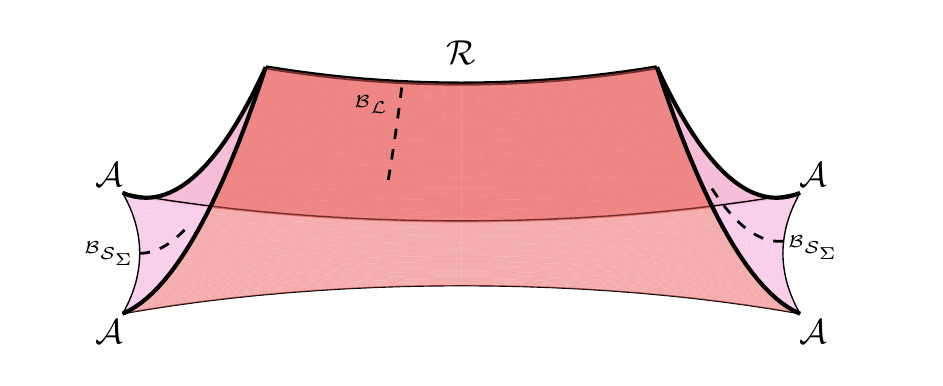} 
    \caption{The cusp structure of a generic null membrane. The two lightsheets that form the lift meet at a spacelike ridge $\mathcal{R}.$ The lift and slope meet at four spacelike seams $\mathcal{A}.$ The ridge and the seams are marked with heavy curves. The lift and slope may each have cusps formed from collisions of null generators, marked $\mathcal{B}_{\mathcal{L}}$ and $\mathcal{B}_{\mathcal{S}_{\Sigma}}$, respectively. These cusps extend down from the ridge along the lift, and up from the contradiction surface along the slope. They are marked here with dashed curves.}
    \label{fig:cusps}
\end{figure}
%%%%%%%%%%%%%%%%%%%%%%%%%%%%%%%%%%%%%%%%%%%%%

We may now use Stokes' theorem on the null membrane to demonstrate that the contradiction surface $\mathcal{C}_{\Sigma}$ has area less than or equal to the area of those surfaces. Heuristically, this follows from a focusing procedure in which one focuses the HRRT surfaces of $V_1$ and $V_2$ up the lift, along the ridge, and down the slope. More formally, the area difference may be computed as an integral of a codimension-$2$ volume form over the boundary of the null membrane:\footnote{While both areas on the left-hand side of this expression are infinite, their \emph{difference} can be treated consistently in a regulator-independent way using the methods of \cite{sorce2019cutoffs}.}
\begin{equation}
    \text{area}(\mathcal{C}_{\Sigma}) - \text{area}(\mathcal{E}_{V_1}^{\text{min}}\cup\mathcal{E}_{V_2}^{\text{min}})
        = \int_{\mathcal{C}_{\Sigma}} \bm{\tilde{\epsilon}} - \int_{\mathcal{E}_{V_1}^{\text{min}}\cup\mathcal{E}_{V_2}^{\text{min}}} \bm{\tilde{\epsilon}}.
\end{equation}
On the other hand, we may choose an affine parameter $\lambda$ for the slope such that the cusps $\mathcal{B}_{\mathcal{S}_{\Sigma}}$ and the contradiction surface $\mathcal{C}_{\Sigma}$ together form a surface of constant $\lambda$, and the seams $\mathcal{A}$ between the slope and the lift form another surface of constant $\lambda$. Then Stokes' theorem as implemented in equation \eqref{eq:stokes-area} yields the expression\footnote{The factor of two multiplying the area from the cusps comes from the fact that each point in a cusp is formed by the collision of two null generators.}
\begin{equation}
    \text{area}(\mathcal{C}_{\Sigma}) + 2\, \text{area}(\mathcal{B}_{\mathcal{S}_{\Sigma}})
        - \text{area}(\mathcal{A})
        = \int_{\mathcal{S}_{\Sigma}} \theta \bm{\epsilon},
\end{equation}
where $\theta$ is the past-directed expansion of $\mathcal{S}_{\Sigma}$ with respect to $\lambda$. Similarly, applying Stokes' theorem to $\mathcal{L}$ we obtain
\begin{equation}
    \text{area}(\mathcal{A}) + 2\, \text{area}(\mathcal{B}_{\mathcal{L}}) + 2\, \text{area}(\mathcal{R})
        - \text{area}(\mathcal{E}_{V_1}^{\text{min}}\cup\mathcal{E}_{V_2}^{\text{min}})
        = \int_{\mathcal{L}} \theta \bm{\epsilon}.
\end{equation}
Adding these two expressions yields the inequality
\begin{equation}
    \text{area}(\mathcal{C}_{\Sigma}) + 2 \left[ \text{area}(\mathcal{B}_{\mathcal{L}})
        + \text{area}(\mathcal{B}_{\mathcal{S}_{\Sigma}}) + \text{area}(\mathcal{R}) \right]
        - \text{area}(\mathcal{E}_{V_1}^{\text{min}}\cup\mathcal{E}_{V_2}^{\text{min}})
        = \int_{\mathcal{N}} \theta \bm{\epsilon} \leq 0,
\end{equation}
where we have used the fact that $\mathcal{L}$ is part of the future lightsheet of an extremal surface and $\mathcal{S}_{\Sigma}$ is part of the causal horizon of a point at infinity, meaning both have non-positive expansion.

We are led, finally, to the following inequality relating the disconnected-wedge HRRT surface $\mathcal{E}_{V_1}^{\text{min}}\cup\mathcal{E}_{V_2}^{\text{min}}$ and the contradiction surface $\mathcal{C}_{\Sigma}$:
\begin{equation} \label{eq:area-inequality}
    \text{area}(\mathcal{E}_{V_1}^{\text{min}}\cup\mathcal{E}_{V_2}^{\text{min}}) - \text{area}(\mathcal{C}_{\Sigma}) \geq  2 \left[ \text{area}(\mathcal{B}_{\mathcal{L}})
        + \text{area}(\mathcal{B}_{\mathcal{S}_{\Sigma}}) + \text{area}(\mathcal{R}) \right].
\end{equation}
To summarize, any complete achronal slice containing the disconnected-wedge HRRT surface also contains a contradiction surface $\mathcal{C}_{\Sigma}$ whose area is smaller by at least the sum of (i) the area of all null generators lost to cusps on $\mathcal{L}$, (ii) the area of all null generators lost to cusps on $\mathcal{S}_{\Sigma}$, and (iii) the area of all null generators lost to the ridge. This contradicts the assumption that the disconnected-wedge HRRT surface could have been found by a maximin procedure, thus proving Theorem \ref{thm:main}. 

This proof has the additional property that it gives a lower bound on the mutual information between $V_1$ and $V_2$ in terms of the geometry of the ridge. While the cusp terms $\text{area}(\mathcal{B}_{\mathcal{L}})$ and $\text{area}(\mathcal{B}_{\mathcal{S}_{\Sigma}})$ in inequality \eqref{eq:area-inequality} are highly dependent on the expansion of the null membrane, and $\text{area}(\mathcal{B}_{\mathcal{S}_{\Sigma}})$ depends on the choice of achronal slice, the ridge term $\text{area}(\mathcal{R})$ depends only on the geometry of a codimension-$2$ surface. Inequality \eqref{eq:area-inequality}, together with the HRRT formula \eqref{eq:HRRT}, implies the following lower bound on $I(V_1:V_2)$:
\begin{equation} \label{eq:bottom-seam}
    I(V_1:V_2) \geq \frac{\text{area}(\mathcal{R})}{2 G_N} + O(1).
\end{equation}
The size of the ridge $\mathcal{R}$ is controlled by the size of the scattering region $J_{12 \rightarrow 12}$; heuristically, equation \eqref{eq:bottom-seam} means that a larger scattering region implies larger mutual information between $V_1$ and $V_2$. 

Note that the proof techniques of this section are agnostic to time reversal; they imply also that the entanglement wedge of $\hat{J}_{12 \rightarrow 1} \cup \hat{J}_{12 \rightarrow 2}$ is connected.

Finally, we note that the inequality \eqref{eq:bottom-seam} is generically strict. This means that we shouldn't expect the converse of Theorem \ref{thm:main} to hold; the entanglement wedge can be connected even when no bulk scattering region exists. This is indeed the case --- an explicit counterexample is shown in Figure \ref{fig:counterexample}.

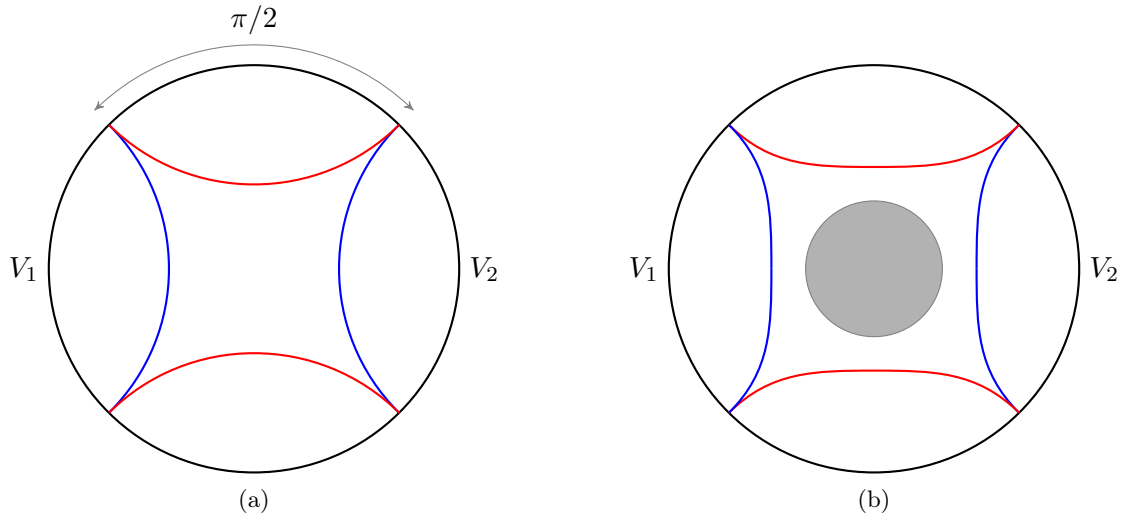
\begin{figure}
    \centering
    \subfloat[\label{fig:withoutmatter}]{
    \begin{tikzpicture}[scale=0.9]
    
    \draw[thick] (0,0) circle (3);
    
    \draw[blue, thick] (2.12, 2.12) to [out=-135,in=135] (2.12, -2.12);
    \draw[blue, thick] (-2.12, 2.12) to [out=-45,in=45] (-2.12, -2.12);
    
    \draw[red, thick] (-2.12, 2.12) to [out=-45,in=-135] (2.12, 2.12);
    \draw[red, thick] (-2.12, -2.12) to [out=45,in=135] (2.12, -2.12);
    
    \node[right] at (3,0) {$V_2$};
    \node[left] at (-3,0) {$V_1$};
    
    \draw[<->,gray,domain=135:45] plot ({3.3*cos(\x)},{3.3*sin(\x)});
    \node[above] at (0,3.3) {$\pi/2$};
    
    \end{tikzpicture}
    }
    \hfill
    \subfloat[\label{fig:withmatter}]{
    \begin{tikzpicture}[scale=0.9]
    
    \node[right] at (3,0) {$V_2$};
    \node[left] at (-3,0) {$V_1$};
    
    \draw[thick] (0,0) circle (3);
    \draw[fill=black!60!,opacity=0.5] (0,0) circle (1);
    
    \draw[blue, thick] (2.12, 2.12) to [out=-135,in=90] (1.5, 0);
    \draw[blue, thick] (1.5, 0) to [out=-90,in=135] (2.12, -2.12);
    
    \draw[red, thick] (2.12, 2.12) to [out=-135,in=0] (0, 1.5);
    \draw[red, thick] (0, 1.5) to [out=-180,in=-45] (-2.12, 2.12);
    
    \draw[red, thick] (2.12, -2.12) to [out=135,in=0] (0, -1.5);
    \draw[red, thick] (0, -1.5) to [out=-180,in=45] (-2.12, -2.12);
    
    \draw[blue, thick] (-2.12, 2.12) to [out=-45,in=90] (-1.5, 0);
    \draw[blue, thick] (-1.5, 0) to [out=-90,in=45] (-2.12, -2.12);
    
    \end{tikzpicture}
    }
    \caption{A counterexample to the converse of Theorem \ref{thm:main}. (a) A scattering configuration in vacuum AdS$_{2+1}$ is chosen so that the restriction of $V_1 \cup V_2$ to a spacelike slice is $\mathbb{Z}_2$ symmetric with its complement. The scattering region $J_{12 \rightarrow 12}$ consists of a single point, and the entanglement wedge of $V_1 \cup V_2$ becomes connected under any small enlargement of the region. (b) When spherically-symmetric matter with positive null energy is added to the bulk, time delay destroys the region $J_{12 \rightarrow 12}$. However, $V_1$ and $V_2$ are still symmetric with their complement; under an infinitesimally small perturbation, the entanglement wedge can become connected while the scattering region stays empty.}
    \label{fig:counterexample}
\end{figure}

%%%%%%%%%%%%%%%%%%%%%%%%%%%%%%%%%%%%%%%%%%
\section{The scattering region is inside the entanglement wedge}
\label{sec:wedge}
%%%%%%%%%%%%%%%%%%%%%%%%%%%%%%%%%%%%%%%%%%

As mentioned in the introduction, a consequence of Theorem \ref{thm:main} is that the scattering region $J_{12 \rightarrow 12}$ lies inside the connected entanglement wedge of $\hat{J}_{1\rightarrow 12} \cup \hat{J}_{2 \rightarrow 12}$. In this section, we present two short proofs of this fact based on the observation that entanglement wedge reconstruction places restrictions on bulk causal structure. The first proof directly uses the fact that the entanglement wedge lies outside the causal wedge in the bulk; the second proof frames the same observation in terms of boundary causality. Despite their different framing, the proofs are equivalent.

Note that the proofs in this section first assume that the entanglement wedge is connected, then use that assumption to show that the scattering region must be inside the entanglement wedge. We suspect it may be possible to prove Theorem \ref{thm:main} directly from entanglement wedge reconstruction by proving that the scattering region must \emph{always} be inside the entanglement wedge, which would in turn imply that the entanglement wedge is connected. We have as yet been unable to furnish a proof of Theorem \ref{thm:main} from this perspective, and it remains an interesting avenue for future work; we comment further on this point in the Discussion (Section \ref{sec:discussion}).

%%%%%%%%%%%%%%%%%%%%%%%%%%%%%%%%%%%%%%%%%%%%%%%%%%%%%%%%%%
\subsection{Proof from the entanglement wedge containing the causal wedge}
%%%%%%%%%%%%%%%%%%%%%%%%%%%%%%%%%%%%%%%%%%%%%%%%%%%%%%%%%%%

Given two spacelike-separated domains of dependence $V_1$ and $V_2$ as in the statement of Theorem \ref{thm:main}, we may define the \emph{complementary} domain of dependence whose interior consists of all points spacelike-separated from $V_1$ and $V_2$. Just as in equation \eqref{eq:complementary-DOD}, this region is defined by
\begin{equation} \label{eq:complementary-DOD-2}
    X = \bar{[\hat{J}^+(V_1 \cup V_2) \cup \hat{J}^-(V_1 \cup V_2)]^c},
\end{equation}
where the overline denotes set closure. In $(2+1)$ bulk dimensions, where $V_1$ and $V_2$ have spacelike intervals as their Cauchy slices, $X$ has two connected components, which we call $X_1$ and $X_2$ as in Section \ref{sec:taskssection}. This is sketched in Figure \ref{fig:BB84taskWithRegions}.

In three bulk dimensions, the connected-wedge HRRT surface of $V_1 \cup V_2$ must consist of two disconnected geodesics. By the homology constraint, one of these must be the HRRT surface of $D(V_1 \cup V_2 \cup X_1)$, while the other is the HRRT surface of $D(V_1 \cup V_2 \cup X_2)$; by $D(S)$ we mean the domain of dependence of the set $S$.

Since the entanglement wedge always contains the causal wedge \cite{maximin}, and since $D(V_1 \cup V_2 \cup X_i)$ has as its future boundary the point $r_i$, the HRRT surface of $D(V_1 \cup V_2 \cup X_i)$ must be outside the past of $r_i$. We conclude that the HRRT surface of $V_1 \cup V_2$ must be outside the region $J^{-}(r_1) \cap J^{-}(r_2)$. It follows that the entanglement wedge of $V_1 \cup V_2$ must contain all points in $J^{-}(r_1) \cap J^{-}(r_2)$ that are in the future of $V_1$ and $V_2$, and hence must contain the scattering region $J_{12 \rightarrow 12}$.

%%%%%%%%%%%%%%%%%%%%%%%%%%%%%%%%%%%%%%%%%%%%%%%%%%%%%%%%%%
\subsection{Proof from entanglement wedge reconstruction}
%%%%%%%%%%%%%%%%%%%%%%%%%%%%%%%%%%%%%%%%%%%%%%%%%%%%%%%%%%%

Define $X$ as in equation \eqref{eq:complementary-DOD-2} as the complementary domain of dependence to $V_1 \cup V_2$ on the boundary. Assuming that the global state is pure, the HRRT surfaces of $X$ and $V_1 \cup V_2$ are the same. The union of any two homology regions $\mathcal{R}_{X}$ and $\mathcal{R}_{V_1 \cup V_2}$ thus forms a complete achronal slice for the bulk.

Denote by $E_{W}(V_1 \cup V_2)$ and $E_W(X)$ the bulk entanglement wedges of $V_1 \cup V_2$ and $X$, respectively; i.e., we have
\begin{equation}
    E_{W}(V_1 \cup V_2) = D(\mathcal{R}_{V_1 \cup V_2})
    \quad \text{and} \quad
    E_{W}(X) = D(\mathcal{R}_{X}),
\end{equation}
where $D(S)$ is the domain of dependence of a set $S$. Assuming entanglement wedge reconstruction, the scattering region $J_{12 \rightarrow 12}$ cannot be in the timelike past of $E_W(X)$. If it were, then an operator in $J_{12 \rightarrow 12}$ could signal the interior of $E_W(X)$ and hence $X$; since $J_{12 \rightarrow 12}$ can be signalled by $c_1$ and $c_2$, this contradicts the assumption that the interior of $X$ is spacelike-separated from $V_1$ and $V_2$ on the boundary.

Similarly, $J_{12 \rightarrow 12}$ cannot be in the timelike future of $E_W(X)$. This part of the proof uses the fact the bulk is $(2+1)$-dimensional; as in the previous subsection, this implies that $X$ splits up into two disconnected components $X_1$ and $X_2$. Since $E_W(V_1 \cup V_2)$ is assumed to be connected, the entanglement wedge of $X$ must also split up into two disconnected components $E_W(X_1)$ and $E_W(X_2)$. This is sketched in Figure \ref{fig:bulksplit}. If $J_{12 \rightarrow 12}$ is in the future of $E_W(X)$, then it must be in the future of at least one of the entanglement wedges $E_W(X_i)$. This would imply that $J_{12 \rightarrow 12}$ and hence both $r_1$ and $r_2$ could be signalled by an operator in $E_W(X_i)$, which by entanglement wedge reconstruction is equivalent to an operator on $X_i$. On the boundary, however, each $X_i$ is restricted to signalling only one of the output points $r_i$; this is a feature of the $(2+1)$-dimensional causal structure. Hence $J_{12 \rightarrow 12}$ cannot be in the timelike future of $E_W(X)$

By exhausting all possibilities, the bulk scattering region $J_{12 \rightarrow 12}$ must be spacelike- or null-separated from $E_W(X)$; i.e., it must be inside $E_W(V_1 \cup V_2)$.

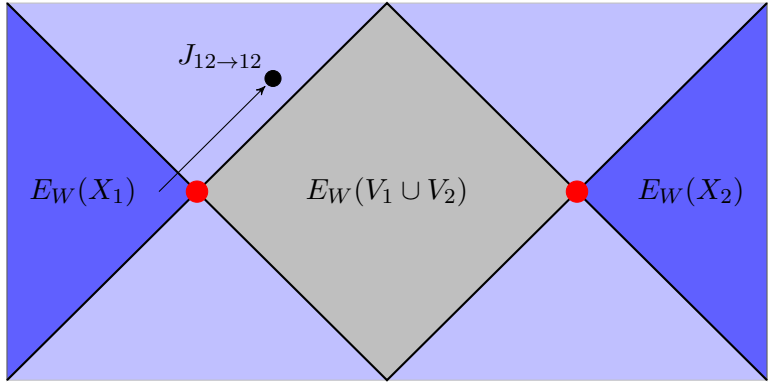
\begin{figure}
    \centering
    \begin{tikzpicture}[scale=1]
    
    \draw[fill=blue,opacity=0.25] (0,2.5) -- (-5,2.5) -- (-5,-2.5) -- (0,-2.5); 
    \draw[fill=blue,opacity=0.25] (0,2.5) -- (5,2.5) -- (5,-2.5) -- (0,-2.5);
    
    \draw[fill=blue,opacity=0.5] (-2.5,0) -- (-5,2.5) -- (-5,-2.5) -- (-2.5,0);
    \draw[fill=blue,opacity=0.5] (2.5,0) -- (5,2.5) -- (5,-2.5) -- (2.5,0);
    
    \draw[fill=lightgray,thick] (-2.5,0) -- (0,-2.5) -- (2.5,0) -- (0,2.5) -- (-2.5,0);
    
    \draw[thick] (-2.5,0) -- (-5,2.5);
    \draw[thick] (-2.5,0) -- (-5,-2.5);
    
    \draw[thick] (2.5,0) -- (5,2.5);
    \draw[thick] (2.5,0) -- (5,-2.5);
    
    \draw[red] plot [mark=*, mark size=4] coordinates{(-2.5,0)};
    \draw[red] plot [mark=*, mark size=4] coordinates{(2.5,0)};
    
    \node at (-4,0) {$E_W(X_1)$};
    \node at (4,0) {$E_W(X_2)$};
    \node at (0,0) {$E_W(V_1 \cup V_2)$};
    
    \draw[black] plot [mark=*,mark size=3] coordinates{(-1.5,1.5)};
    \draw[->] (-3,0) -- (-1.6,1.4);
    \node[above left] at (-1.5,1.5) {$J_{12\rightarrow 12}$};
    
    \end{tikzpicture}
    \caption{A cross-section of the bulk spacetime when the entanglement wedge of $V_1 \cup V_2$ is connected. The entanglement wedge of the complementary domain of dependence $X$ has two disconnected components $E_W(X_1)$ and $E_W(X_2)$. If $J_{12 \rightarrow 12}$ is in the future of one of these wedges $E_W(X_i)$, $X_i$ can signal $J_{12 \rightarrow 12}$ with a local operator in its entanglement wedge; this contradicts the fact that each $X_i$ can only signal one of the output points $r_i$.}
    \label{fig:bulksplit}
\end{figure}

%%%%%%%%%%%%%%%%%%%%%%%%%%%%%%%%%%%%%%%%%%
\section{Discussion}\label{sec:discussion}
%%%%%%%%%%%%%%%%%%%%%%%%%%%%%%%%%%%%%%%%%%

In this paper, we have extended the work of \cite{may2019quantum} concerning the \emph{connected wedge theorem} (Theorem \ref{thm:main}). We have refined the boundary arguments presented in \cite{may2019quantum} (Section \ref{sec:taskssection}), and have provided a proof of the connected wedge theorem using classical general relativity (Section \ref{sec:GR}). We have also provided a lower bound on the mutual information $I(V_1 : V_2)$ in terms of the geometry of the bulk scattering region $J_{12 \rightarrow 12}$ (eq.\eqref{eq:bottom-seam}), and proved that the bulk scattering region must always lie inside the entanglement wedge of $V_1 \cup V_2$ (Section \ref{sec:wedge}). Here we discuss various potential directions for future work.

%%%%%%%%%%%%%%%%%%%%%%%%%%%%%%%%%%%%%%%%%%
\subsection{General relativity as a quantum information theorist}
\label{sec:gr-info-theorist}
%%%%%%%%%%%%%%%%%%%%%%%%%%%%%%%%%%%%%%%%%%

In Section \ref{sec:taskssection} we studied a particular quantum computation occurring locally in the bulk, and reasoned about the resources necessary in the boundary to reproduce the same computation there nonlocally. We chose a simple task, called the $\textbf{B}_{84}^{\times n}$ task, and argued that for the computation to occur in the bulk with probability of success $p_{\text{suc}}=1-\epsilon$, the resource state $\rho_{V_1V_2}$ must satisfy
\begin{align}
    \frac{1}{2}I(V_1:V_2)_\rho \geq - \log [2(\epsilon+\beta^n)].
\end{align}
Further, we pointed out that this bound is essentially tight --- via a simple teleportation protocol we can complete the task with $p_{\text{suc}}=1$ and $I(V_1:V_2)/2=n$. 

Suppose however that we asked for a different computation to occur in the bulk picture. For instance, suppose we specify some simple, non-Clifford unitary $U$. In that case, no simple procedure for completing that computation nonlocally is known. The only known procedures consume exponential entanglement \cite{chakraborty2015attack,beigi2011simplified,dolev2019constraining} --- for example, using port-teleportation \cite{ishizaka2008asymptotic}. More precisely, completing the task with $p_{\text{suc}}=1-\epsilon$ on $n$ qubits requires a mutual information of order $n2^n/\epsilon$. However, the strongest lower bounds on the required mutual information are linear in $n$ \cite{may2019quantum}. A longstanding question in quantum cryptography is whether the exponentially costly procedures can be improved to meet the linear lower bound.

General relativity answers this quantum information theoretic question. The AdS/CFT dictionary gives a nonlocal implementation of these more general computations using only linear entanglement, which is an exponential improvement over all known constructions. To see this, we need only consider building a quantum computer that implements $U$ in the bulk scattering region $J_{12\rightarrow 12}$. Arranging for the scattering process to complete the computation in the bulk implies the same computation occurs in the boundary. As in the discussion around the tasks argument for Theorem \ref{thm:main}, the number of qubits $n$ involved in the computation can be taken to be $O(1/G_N^{\alpha})$ for $\alpha\in [0,1)$. Meanwhile, the mutual information between the input regions is $O(1/G_N)$, and so linear in the number of qubits. In future work we hope to construct explicit circuit-based implementations of nonlocal computations based on the AdS/CFT dictionary. 

%%%%%%%%%%%%%%%%%%%%%%%%%%%%%%%%%%%%%%%%%%
\subsection{Higher-dimensional connected wedge theorems}
\label{sec:higher-d}
%%%%%%%%%%%%%%%%%%%%%%%%%%%%%%%%%%%%%%%%%%

We have largely discussed Theorem \ref{thm:main} in the context of asymptotically AdS$_{2+1}$ spacetimes. The situation in higher dimensional theories of gravity is more subtle. A key feature of the connected wedge theorem in AdS$_{2+1}$ is that a given set of boundary points $\{c_1, c_2, r_1, r_2\}$ may have a nontrivial bulk scattering region $J_{12\rightarrow12}$ even when the boundary scattering region $\hat{J}_{12\rightarrow12}$ is empty. In higher dimensions this feature is generally absent: in fact, in any asymptotically global AdS spacetime satisfying the null energy condition, the non-emptiness of $J_{12 \rightarrow 12}$ implies the non-emptiness of $\hat{J}_{12\rightarrow 12}.$ This was observed earlier in the context of scattering in higher dimensions \cite{GGP, HPPS, Penedones, MSZ}. This observation makes Theorem \ref{thm:main} trivial in dimensions $d > 2$: when both bulk and boundary scattering regions are nontrivial, the input regions $V_1$ and $V_2$ overlap, and so the connectedness of their joint entanglement wedge is unsurprising. 

However, there exist asymptotically \emph{locally} AdS spacetimes\footnote{A spacetime is said to be ``asymptotically global AdS$_{d+1}$'' if it has a conformal boundary such that (i) the boundary has the same structure as that of global AdS, i.e., it has topology $S_{d-1} \times \mathbb{R}$ with the standard conformal metric, and (ii) the AdS equation $R_{ab} = -d g_{ab}$ is satisfied at leading order near the boundary. It is said to be ``asymptotically locally AdS$_{d+1}$ if condition (ii) but not (i) is satisfied.} in which the existence of a bulk scattering region does not imply the existence of a boundary scattering region. One such spacetime is the \emph{AdS soliton} \cite{horowitz1998ads}, whose boundary is topologically distinct from that of AdS. In $(3+1)$ dimensions, there exist $2$-to-$2$ bulk-only scattering configurations on the AdS soliton. Theorem \ref{thm:main} thus applies nontrivially.

We may also ask if there is some modifcation of the connected wedge theorem which applies nontrivially to asymptotically global AdS$_{d+1}$. Following the higher dimensional scattering story, it is natural to introduce additional output points $r_1,...,r_d$. Doing so it is possible to arrange the $c_i$ and $r_i$ such that $J_{12\rightarrow 1...d}$ is nonempty, but $\hat{J}_{12\rightarrow 1...d}$ is empty. In this case we return to a setting where the bulk geometry has causal features not present in the boundary, and it is plausible that this imposes requirements on boundary correlations. We have not been able to resolve this question, but offer some comments and partial progress below. 

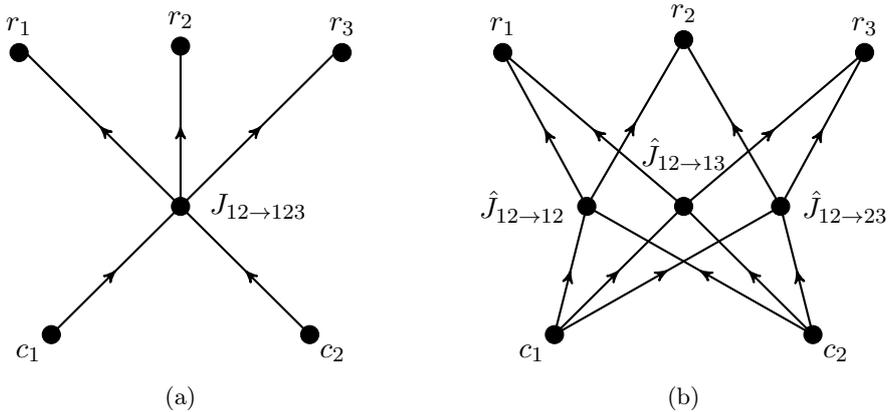
\begin{figure}
\centering
    \subfloat[\label{fig:2-3-bulk}]{
    \begin{tikzpicture}[scale=1.7]
    
    %central point
    \draw[black] plot [mark=*, mark size=2] coordinates{(0,0)};
    \node[right=0.25cm] at (0,0) {${J}_{12\rightarrow 123}$};
    
    %Lower two points
    \draw[black] plot [mark=*, mark size=2] coordinates{(-1,-1)};
    \node[below left] at (-1,-1) {$c_1$};
    \draw[black] plot [mark=*, mark size=2] coordinates{(1,-1)};
    \node[below right] at (1,-1) {$c_2$};
    
    %arrows from lower points into the middle
    \draw[thick,postaction={on each segment={mid arrow}}] (-1,-1) -- (0,0);
    \draw[thick,postaction={on each segment={mid arrow}}] (1,-1) -- (0,0);
    
    %upper points
    \draw[black] plot [mark=*, mark size=2] coordinates{(-1.25,1.2)};
    \draw[black] plot [mark=*, mark size=2] coordinates{(0,1.25)};
    \draw[black] plot [mark=*, mark size=2] coordinates{(1.25,1.2)};
    
    %labels for upper points
    \node[above=0.1cm] at (-1.25,1.2) {$r_1$};
    \node[above=0.1cm] at (0,1.25) {$r_2$};
    \node[above=0.1cm] at (1.25,1.2) {$r_3$};
    
    %arrows from middle to upper points
    \draw[thick,postaction={on each segment={mid arrow}}] (0,0) -- (-1.25,1.25);
    \draw[thick,postaction={on each segment={mid arrow}}] (0,0) -- (0,1.25);
    \draw[thick,postaction={on each segment={mid arrow}}] (0,0) -- (1.25,1.25);
    
    \end{tikzpicture} 
    }
    \hspace{1cm}
    \subfloat[\label{fig:2-3-boundary}]{
    \begin{tikzpicture}[scale=1.7]
    
    \coordinate (c1) at (-1,-1);
    \coordinate (c2) at (1,-1);
    
    \coordinate (r1) at (-1.4,1.2);
    \coordinate (r2) at (0,1.3);
    \coordinate (r3) at (1.4,1.2);
    
    \coordinate (p13) at (0,0);
    \coordinate (p12) at (-0.75,0);
    \coordinate (p23) at (0.75,0);
    
    %upper points
    \draw[black] plot [mark=*, mark size=2] coordinates{(r1)};
    \draw[black] plot [mark=*, mark size=2] coordinates{(r2)};
    \draw[black] plot [mark=*, mark size=2] coordinates{(r3)};
    
    %Lower two points
    \draw[black] plot [mark=*, mark size=2] coordinates{(c1)};
    \node[below left] at (c1) {$c_1$};
    \draw[black] plot [mark=*, mark size=2] coordinates{(c2)};
    \node[below right] at (c2) {$c_2$};
    
    %Intermediate points
    \draw[black] plot [mark=*, mark size=2] coordinates{(p12)};
    \draw[black] plot [mark=*, mark size=2] coordinates{(p13)};
    \draw[black] plot [mark=*, mark size=2] coordinates{(p23)};
    
    %Arrows out of lower points
     \draw[thick,postaction={on each segment={mid arrow}}] (c1) -- (p13);
     \draw[thick,postaction={on each segment={mid arrow}}] (c1) -- (p12);
     \draw[thick,postaction={on each segment={mid arrow}}] (c1) -- (p23);
     
     \draw[thick,postaction={on each segment={mid arrow}}] (c2) -- (p13);
     \draw[thick,postaction={on each segment={mid arrow}}] (c2) -- (p12);
     \draw[thick,postaction={on each segment={mid arrow}}] (c2) -- (p23);
     
     %Arrows out of intermediate points
     \draw[thick,postaction={on each segment={mid arrow}}] (p12) -- (r1);
     \draw[thick,postaction={on each segment={mid arrow}}] (p12) -- (r2);
     
     \draw[thick,postaction={on each segment={mid arrow}}] (p13) -- (r1);
     \draw[thick,postaction={on each segment={mid arrow}}] (p13) -- (r3);
     
     \draw[thick,postaction={on each segment={mid arrow}}] (p23) -- (r2);
     \draw[thick,postaction={on each segment={mid arrow}}] (p23) -- (r3);
     
     %labels for upper points
    \node[above=0.1cm] at (r1) {$r_1$};
    \node[above=0.1cm] at (r2) {$r_2$};
    \node[above=0.1cm] at (r3) {$r_3$};
    
    %Labels for intermediate points
    \node[above=0.35cm] at (p13) {$\hat{J}_{12\rightarrow 13}$};
    \node[left=0.15cm] at (p12) {$\hat{J}_{12 \rightarrow 12}$};
    \node[right=0.15cm] at (p23) {$\hat{J}_{12 \rightarrow 23}$};
      
    \end{tikzpicture}
    }
    \caption{The causal structure of a $2$-to-$3$ bulk scattering configuration with no boundary scattering region. (a) The causal structure in the bulk, where each input point may signal an intermediary region that may signal all three output points. (b) The causal structure in the boundary, where each input point may signal intermediary regions that may signal \emph{two} of the three output points.}
    \label{fig:2-3-scattering}
\end{figure}

Consider the case of global AdS$_{3+1}$. Suppose one has chosen a configuration of points $c_1, c_2, r_1, r_2, r_3$ on the boundary of vacuum AdS$_{3+1}$ where the boundary region $\hat{J}_{12 \rightarrow 123}$ is empty, but the bulk region ${J}_{12 \rightarrow 123}$ is nonempty. The bulk causal structure of this setup is shown heuristically in Figure \ref{fig:2-3-bulk}: both input points $c_1$ and $c_2$ can signal a bulk region $J_{12\rightarrow123}$, which can in turn signal each of the output points $r_1, r_2$, and $r_3$. The boundary structure has $\hat{J}_{12\rightarrow 123}$ empty, but $\hat{J}_{12\rightarrow j k}$ nonempty for any two output points $r_{j,k}.$

There are now two distinct classes of boundary causal regions associated with the scattering configuration: the two ``input regions'' $\hat{J}_{1 \rightarrow 123}$ and $\hat{J}_{2 \rightarrow 123}$, which are analogous to the input regions in the $2$-to-$2$ scattering configuration, and also the three ``intermediate regions'' $\hat{J}_{12 \rightarrow 12},$ $\hat{J}_{12 \rightarrow 13}$, and $\hat{J}_{12 \rightarrow 23}.$ We can imagine setting up generalizations of the quantum tasks discussed in Section \ref{sec:taskssection} on these configurations of input and output points. Again in the bulk the required quantum computations may occur locally, in the region $J_{12\rightarrow 123}$, while in the boundary they must occur nonlocally. It is natural to suspect that some combination of the input and intermediate regions must share large correlations in order to perform these computations nonlocally. 

One way to resolve this question is at the level of quantum information theory --- we can attempt to prove that entanglement is necessary, or attempt to construct procedures that perform the computation nonlocally without entanglement. We go some way towards this in Appendix \ref{sec:onetimepad} by showing that in cases where we can hope to build simple procedures, entanglement is unnecessary. A second approach is to address this question at the level of general relativity --- by studying the extremal surfaces attached to the input and intermediate regions, we can understand when these regions share large correlations. Because the input and intermediate regions are not symmetric about any time slice, this is a nontrivial task that we leave to future work. It is intriguing that solving a problem in general relativity --- finding disconnected entanglement wedges --- would have a strong implication in quantum information theory. In particular finding disconnected wedges would imply all quantum tasks on the $2$-to-$3$ geometry can be performed without entanglement.

%%%%%%%%%%%%%%%%%%%%%%%%%%%%%%%%%%%%%%%%%%
\subsection{Alternative proofs of the connected wedge theorem}
\label{sec:alternative-proofs}
%%%%%%%%%%%%%%%%%%%%%%%%%%%%%%%%%%%%%%%%%%

As mentioned in the introduction and in Section \ref{sec:wedge}, we suspect it may be possible to prove Theorem \ref{thm:main} directly from entanglement wedge reconstruction. If one could argue from holographic considerations that the scattering region $J_{12 \rightarrow 12}$ lies within the entanglement wedge $E_W(V_1 \cup V_2)$, this would imply that the entanglement wedge is connected. This remains an open question.

Another approach toward Theorem \ref{thm:main} would be to make direct contact with the correlation function singularities referenced in the introduction. While the existence of bulk-only scattering configurations that motivate the connected wedge theorem was first identified in the study of boundary $n$-point functions, we have not used any of the technical results from that work in our proofs. It would be interesting to find a way to show that $V_1$ and $V_2$ must have $O(1/G_N)$ mutual information directly from the existence of a perturbative singularity in four-point functions on $\{c_1, c_2, r_1, r_2\}$. 

%%%%%%%%%%%%%%%%%%%%%%%%%%%%%%%%%%%%%%%%%%
\subsection{Metric reconstruction}
\label{sec:metric-reconstruction}
%%%%%%%%%%%%%%%%%%%%%%%%%%%%%%%%%%%%%%%%%%

It is of considerable interest to determine how the bulk metric of general relativity is encoded in the boundary under the AdS/CFT dictionary. An early proposal for reconstructing the bulk metric from boundary data involved \emph{light-cone cuts}: the restriction to the spacetime boundary of the light cone of a bulk point. It was shown in \cite{LCC1, LCC2} that knowledge of light-cone cuts, which can be determined from singularities in boundary correlators, can be used to reconstruct the metric. Another approach to metric reconstruction is to use the geometry of bulk extremal surfaces, whose areas are known on the boundary by the HRRT formula \eqref{eq:HRRT}. It has been argued that in many cases the extremal surface areas determine the bulk metric \cite{rigidity1}.

The connected wedge theorem gives a nontrivial geometric relation between the causal structure of a spacetime and the areas of its extremal surfaces. It would be interesting to understand how, if at all, the intuition gleaned from Theorem \ref{thm:main} might assist in relating the two extant approaches to metric reconstruction.

\acknowledgments{We thank Patrick Hayden, Mark Van Raamsdonk, Jamie Sully, Veronika Hubeny, Adam Levine, Mukund Rangamani, David Wakeham, Chris Waddell, and Aron Wall for useful conversations. We especially thank Mark Wilde for pointing out an error in version 1 of this paper, and Patrick Hayden for very useful conversations during the writing of version 2. AM thanks the Stanford Institute of Theoretical Physics for hospitality during an extended visit, during which this work was completed. Portions of this work were completed at the Yukawa Institute for Theoretical Physics during workshop YITP-T-19-03. The authors are supported in part by the Simons Foundation It from Qubit collaboration. AM is further supported by C-GSM and MSFSS awards given by the National Science and Engineering Research Council. GP and JS are further supported by AFOSR (FA9550-16-1-0082) and DOE Award No.\ DE-SC0019380.}

\appendix

%%%%%%%%%%%%%%%%%%%%%%%%%%%%%%%%%%%%%%%%%%
\section{Summary of some information measures and their properties}\label{appendix:relativeentropyandTrace}
%%%%%%%%%%%%%%%%%%%%%%%%%%%%%%%%%%%%%%%%%%

In this appendix we recall various standard inequalities and distinguishability measures, and prove a few lemmas we need in the main article.

We begin by recalling the definition of the trace distance,
\begin{definition}\textbf{Trace distance:} $||\rho-\sigma||_1 := \tr|\rho-\sigma|$
\end{definition}
The trace distance is zero if and only if $\rho=\sigma$, and has a maximal value of $2$. 

The trace distance has an operational interpretation. Suppose we are handed a quantum system whose reduced density matrix is $\rho$ with probability $1/2$, or $\sigma$ with probability $1/2$. Then our probability of successfully distinguishing $\rho$ from $\sigma$, optimized over all possible measurements, is
\begin{align}\label{eq:maxdistinguish}
    p_{\text{dist}}^{\text{max}}(\rho,\sigma) = \frac{1}{2} + \frac{1}{4}||\rho-\sigma||_1
\end{align}
We can use this operational interpretation to prove a continuity bound on success probabilities of any task.
\begin{lemma}\label{lemma:psuccontinuity}
Consider a task which takes as input a quantum system $A$. Then the success probability of the task, call it $p_{\text{suc}}$, satisfies the continuity bound
\begin{align}\label{eq:psuccontinuity}
    |p_{\text{suc}}(\rho_A) - p_{\text{suc}}(\sigma_A)| \leq \frac{1}{2}|| \rho_A-\sigma_A ||_1.
\end{align}
\end{lemma}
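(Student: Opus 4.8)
The plan is to exploit the operational meaning of the trace distance recorded in equation \eqref{eq:maxdistinguish}, turning a limit on state discrimination into a limit on how differently the task can behave on $\rho_A$ versus $\sigma_A$. First I would set up a discrimination game: we are handed $A$ in the state $\rho_A$ with probability $1/2$ and in $\sigma_A$ with probability $1/2$, and we try to guess which. Relabelling if necessary so that $p_{\text{suc}}(\rho_A) \ge p_{\text{suc}}(\sigma_A)$, consider the strategy ``run the task on $A$ (with all auxiliary resources, classical inputs, and internal randomness exactly as the task prescribes, which are the same in both cases), let the referee check whether it succeeded, and guess $\rho_A$ if it succeeded and $\sigma_A$ if it failed.''

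Next I would compute the success probability of this strategy. Conditioned on the input being $\rho_A$ we guess correctly with probability $p_{\text{suc}}(\rho_A)$, and conditioned on $\sigma_A$ we guess correctly with probability $1 - p_{\text{suc}}(\sigma_A)$, so the overall probability of a correct guess is
\begin{equation}
\frac{1}{2} p_{\text{suc}}(\rho_A) + \frac{1}{2}\bigl(1 - p_{\text{suc}}(\sigma_A)\bigr) = \frac{1}{2} + \frac{1}{2}\bigl( p_{\text{suc}}(\rho_A) - p_{\text{suc}}(\sigma_A)\bigr).
\end{equation}
This particular strategy cannot beat the optimal value $p_{\text{dist}}^{\text{max}}(\rho_A, \sigma_A) = \frac{1}{2} + \frac{1}{4}\|\rho_A - \sigma_A\|_1$ from \eqref{eq:maxdistinguish}, which rearranges to $p_{\text{suc}}(\rho_A) - p_{\text{suc}}(\sigma_A) \le \frac{1}{2}\|\rho_A - \sigma_A\|_1$; undoing the relabelling gives the absolute-value bound \eqref{eq:psuccontinuity}.

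As a sanity check that nothing is being swept under the rug, I would also note the direct argument: since the entire task followed by the referee's success test is a fixed sequence of channels and measurements, linearity of quantum mechanics makes $p_{\text{suc}}(\cdot)$ an affine functional of the input, so $p_{\text{suc}}(\rho_A) = \tr(M\rho_A)$ for some effect operator $0 \le M \le I$. Writing $\rho_A - \sigma_A = P - Q$ with $P, Q \ge 0$ of orthogonal support and $\tr P = \tr Q = \frac{1}{2}\|\rho_A-\sigma_A\|_1$ (using $\tr(\rho_A - \sigma_A)=0$), one gets $|\tr(M(\rho_A - \sigma_A))| \le \max\{\tr(MP), \tr(MQ)\} \le \frac{1}{2}\|\rho_A - \sigma_A\|_1$, which is exactly the claim.

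I do not anticipate a real obstacle. The only point meriting care is the implicit assumption that ``the task succeeded'' is a well-defined event determined by the task's outputs and its fixed specification, so that it defines a legitimate two-outcome measurement on $A$; this is precisely what the quantum-task framework supplies through the referee, so both arguments go through, and the discrimination-game version is the one that fits the narrative of this appendix.
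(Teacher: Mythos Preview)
Your proposal is correct and follows essentially the same discrimination-game argument as the paper: run the task, guess $\rho$ on success and $\sigma$ on failure, compare to the optimal distinguishing probability \eqref{eq:maxdistinguish}, and handle the other sign (the paper swaps the guessing rule rather than relabelling, but this is cosmetic). Your supplementary effect-operator argument is a nice alternative that the paper does not include, but the primary line of reasoning matches.
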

\begin{proof}
The bound follows by considering the task as a means of distinguishing $\rho$ and $\sigma$. The task is given the input system $\rho$ with probability $1/2$ or $\sigma$ with probability $1/2$. If the task succeeds, we declare the state to be $\rho$. If the task fails, we declare the state to be $\sigma$. Then
\begin{align}
    p_{\text{dist}}(\rho,\sigma) &= \frac{1}{2} p(\text{task succeeds}\,|\,\text{state is} \,\rho) + \frac{1}{2} p(\text{task fails}\,|\,\text{state is} \,\sigma) \nonumber \\
    &= \frac{1}{2} p_{\text{suc}}(\rho) + \frac{1}{2}(1 - p_{\text{suc}}(\sigma))
\end{align}
Now we use that the probability of success of this particular method of distinguishing between $\rho$ and $\sigma$ is bounded above by the maximal probability, as expressed in terms of the trace distance in equation \eqref{eq:maxdistinguish}. This leads to
\begin{align}
    p_{\text{suc}}(\rho) - p_{\text{suc}}(\sigma) \leq \frac{1}{2}||\rho-\sigma||_1
\end{align}
reversing the roles of $\rho$ and $\sigma$ (declaring the state to be $\rho$ when the task fails, and to be $\sigma$ when it succeeds) we obtain the second inequality
\begin{align}
    p_{\text{suc}}(\sigma) - p_{\text{suc}}(\rho) \leq \frac{1}{2}||\rho-\sigma||_1
\end{align}
which establishes \eqref{eq:psuccontinuity}. 
\end{proof}

Next we recall the fidelity and its relation to the trace distance.
\begin{definition}\textbf{Fidelity:} $ F(\rho,\sigma) = \left[\tr \sqrt{\sqrt{\rho}\sigma\sqrt{\rho}} \right]^2$
\end{definition}
The trace distance and fidelity are related in the following lemma \cite{fuchs1999cryptographic,wilde2013quantum}
\begin{lemma} Two states $\rho$, $\sigma$, that are close in trace distance have large fidelity. Similarly, two states that have large fidelity are close in trace distance. In particular, 
\begin{align}
    1-\sqrt{F(\rho,\sigma)}\leq \frac{1}{2}|| \rho-\sigma||_1 \leq \sqrt{1-F(\rho,\sigma)}.
\end{align}
\end{lemma}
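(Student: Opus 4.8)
The plan is to establish the two inequalities separately, using the standard toolkit of fidelity manipulations: Uhlmann's theorem, the variational characterization of fidelity in terms of measurement statistics, and the monotonicity (data-processing) inequality for the trace distance under CPTP maps.

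For the upper bound $\tfrac12\|\rho-\sigma\|_1 \leq \sqrt{1-F(\rho,\sigma)}$, I would reduce to pure states. By Uhlmann's theorem there exist purifications $\ket{\psi}$ of $\rho$ and $\ket{\phi}$ of $\sigma$ on a common enlarged Hilbert space with $|\braket{\psi}{\phi}|^2 = F(\rho,\sigma)$. The operator $\proj{\psi}-\proj{\phi}$ is supported on the (at most two-dimensional) span of $\ket{\psi}$ and $\ket{\phi}$, is traceless, and has determinant $-(1-|\braket{\psi}{\phi}|^2)$ in that subspace, so its only nonzero eigenvalues are $\pm\sqrt{1-|\braket{\psi}{\phi}|^2}$ and hence $\tfrac12\|\proj{\psi}-\proj{\phi}\|_1 = \sqrt{1-|\braket{\psi}{\phi}|^2}$. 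Since $\rho$ and $\sigma$ are obtained from $\proj{\psi}$ and $\proj{\phi}$ by the same partial trace, and the trace distance is non-increasing under any CPTP map, we get $\tfrac12\|\rho-\sigma\|_1 \leq \tfrac12\|\proj{\psi}-\proj{\phi}\|_1 = \sqrt{1-F(\rho,\sigma)}$.

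For the lower bound $1-\sqrt{F(\rho,\sigma)} \leq \tfrac12\|\rho-\sigma\|_1$, I would use the measurement formula $\sqrt{F(\rho,\sigma)} = \min_{\{M_i\}} \sum_i \sqrt{\tr[M_i\rho]\,\tr[M_i\sigma]}$, the minimum ranging over all POVMs. Fix a POVM attaining the minimum and set $p_i = \tr[M_i\rho]$, $q_i = \tr[M_i\sigma]$. The elementary inequality $\sqrt{p_iq_i}\geq \min(p_i,q_i)$, summed over $i$ together with $\sum_i p_i = \sum_i q_i = 1$, gives $\sqrt{F(\rho,\sigma)} = \sum_i\sqrt{p_iq_i} \geq \sum_i\min(p_i,q_i) = 1 - \tfrac12\sum_i|p_i-q_i|$, i.e. $1-\sqrt{F(\rho,\sigma)} \leq \tfrac12\sum_i|p_i-q_i|$. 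The right-hand side is the classical trace distance of the two outcome distributions, which by monotonicity under the measurement channel is at most $\tfrac12\|\rho-\sigma\|_1$, completing the argument.

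The only genuinely nontrivial inputs are Uhlmann's theorem and the variational formula for fidelity; since this appendix is meant as a digest of standard quantum-information facts, I would simply cite these (e.g.\ from \cite{wilde2013quantum}) rather than reprove them. If a fully self-contained treatment were desired, the main obstacle would be Uhlmann's theorem — equivalently the polar-decomposition statement $\max_U |\tr(U\sqrt{\rho}\sqrt{\sigma})| = \tr|\sqrt{\rho}\sqrt{\sigma}| = \sqrt{F(\rho,\sigma)}$ — after which everything reduces to elementary linear algebra and the data-processing inequality for trace distance already used above.
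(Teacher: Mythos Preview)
Your argument is correct and is essentially the standard Fuchs--van de Graaf derivation. The paper, however, does not supply a proof of this lemma at all: it simply states the inequalities and cites \cite{fuchs1999cryptographic,wilde2013quantum} for the result. So your proposal goes beyond what the paper does --- you give an actual proof sketch, whereas the paper treats the lemma as a black-box literature fact in its appendix of standard information measures. Your closing remark that one would ``simply cite these'' is therefore already exactly what the paper does.
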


%%%%%%%%%%%%%%%%%%%%%%%%%%%%%%%%%%%%%%%%%%%%%%%%%%%%%%%%%
\section{Entanglement-hiding nonlocal computations for \texorpdfstring{$2$}{TEXT}-to-\texorpdfstring{$2$}{TEXT} scattering}\label{sec:22appendix}
%%%%%%%%%%%%%%%%%%%%%%%%%%%%%%%%%%%%%%%%%%%%%%%%%%%%%%%%%

In section \ref{sec:taskssection}, we commented that if Alice can make use of the complementary regions $X_1$ and $X_2$ in her protocol then it becomes possible to complete the $\textbf{B}_{84}$ task nonlocally without large correlations between the input regions. In this appendix we briefly explain how this is possible. 

The protocol is based on the quantum one-time pad \cite{ambainis2000private}, which is a tool from quantum cryptography used to hide quantum information using classical information. A one-time pad is a set of unitaries $\{U_k\}$ such that averaging over $k$ returns the maximally mixed state,
\begin{align}
    \frac{1}{|k|}\sum_k U_k\rho_A U^\dagger_k = \frac{\mathcal{I}}{d_A},
\end{align}
for any $\rho_A$. For instance, when $A$ is a single qubit, the Pauli set $\{P_k\}=\{\mathcal{I},X,Y,Z\}$ works as a one-time pad. We may then define the state
\begin{align}\label{eq:OTPstate}
    \ket{\psi} = \frac{1}{|k|}\sum_k \ket{k}_{x_1}\otimes\ket{k}_{x_2}\otimes (\mathcal{I}\otimes P_k) \ket{\Psi^+}_{v_1v_2}
\end{align}
on $\mathbb{C}^4 \otimes \mathbb{C}^4 \otimes (\mathbb{C}^2)^{\otimes 2}$, where $\ket{\Psi^+}$ is the maximally entangled state $(\ket{00}+\ket{11})/\sqrt{2}$. Treating the two single-qubit subsystems as a composite four-dimensional subsystem, $\ket{\psi}$ is a three-party GHZ state for systems $x_1$, $x_2$, and $v_1 v_2$.

The state $\ket{\psi}$ has zero mutual information between $v_1$ and $v_2$, but can still be used as a resource for the $\textbf{B}_{84}$ task as follows. First, we send systems $x_i$ to spacetime regions $X_i$, and systems $v_i$ to spacetime regions $V_i$ (cf. Figure \ref{fig:BB84taskWithRegions}). When the nonlocal circuit of Figure \ref{fig:nonlocalcircuit} is applied to the state $\mathcal{I}\otimes P_k\ket{\Psi^+}$, the classical measurement outcomes of the circuit are related to $q$ and $b$ in a way that depends on $P_k=X^mZ^n$, in particular $(-1)^b=(s_1)^q(s_2)^{1-q}s_3(-1)^{m(1-q)+nq}$. If we apply the nonlocal circuit to $\ket{\psi}$ in systems $v_1 v_2$, and measure $\ket{\psi}$ in the $k$-basis in the $x_i$ subsystems, the combined classical output of both measurements is sufficient to complete the $\textbf{B}_{84}$ task with zero information between $v_1$ and $v_2$.

Importantly, in the context of holography, the three-party GHZ entanglement between $x_1$, $x_2$ and $v_1 v_2$ cannot be created by acting locally at $c_1$ and $c_2$, since $X_1$ and $X_2$ are not in the future lightcone of either $c_1$ or $c_2$. Instead, the entanglement must already exist in the initial semiclassical state. As discussed in Section \ref{sec:22connectedwedge}, it is not expected that this is ever the case.

%%%%%%%%%%%%%%%%%%%%%%%%%%%%%%%%%%%%%%%%%%%%%%%%%%%%%%%%%
\section{Entanglement-hiding nonlocal computations for \texorpdfstring{$2$}{TEXT}-to-\texorpdfstring{$3$}{TEXT} scattering}
\label{sec:onetimepad}
%%%%%%%%%%%%%%%%%%%%%%%%%%%%%%%%%%%%%%%%%%%%%%%%%%%%%%%%%

\begin{figure}
\centering
    \begin{tikzpicture}[scale=1.6]
    
    \coordinate (c1) at (-1,-1);
    \coordinate (c2) at (1,-1);
    
    \coordinate (r1) at (-1.4,1.2);
    \coordinate (r2) at (0,1.3);
    \coordinate (r3) at (1.4,1.2);
    
    \coordinate (p13) at (0,0);
    \coordinate (p12) at (-0.75,0);
    \coordinate (p23) at (0.75,0);
    
    %upper points
    \draw[black] plot [mark=*, mark size=2] coordinates{(r1)};
    \draw[black] plot [mark=*, mark size=2] coordinates{(r2)};
    \draw[black] plot [mark=*, mark size=2] coordinates{(r3)};
    
    %Lower two points
    \draw[black] plot [mark=*, mark size=2] coordinates{(c1)};
    \node[below left] at (c1) {$c_1$};
    \draw[black] plot [mark=*, mark size=2] coordinates{(c2)};
    \node[below right] at (c2) {$c_2$};
    
    %Intermediate points
    \draw[black] plot [mark=*, mark size=2] coordinates{(p12)};
    \draw[black] plot [mark=*, mark size=2] coordinates{(p13)};
    \draw[black] plot [mark=*, mark size=2] coordinates{(p23)};
    
    %Arrows out of lower points
     \draw[thick,postaction={on each segment={mid arrow}}] (c1) -- (p13);
     \draw[thick,postaction={on each segment={mid arrow}}] (c1) -- (p12);
     \draw[thick,postaction={on each segment={mid arrow}}] (c1) -- (p23);
     
     \draw[thick,postaction={on each segment={mid arrow}}] (c2) -- (p13);
     \draw[thick,postaction={on each segment={mid arrow}}] (c2) -- (p12);
     \draw[thick,postaction={on each segment={mid arrow}}] (c2) -- (p23);
     
     %Arrows out of intermediate points
     \draw[thick,postaction={on each segment={mid arrow}}] (p12) -- (r1);
     \draw[thick,postaction={on each segment={mid arrow}}] (p12) -- (r2);
     
     \draw[thick,postaction={on each segment={mid arrow}}] (p13) -- (r1);
     \draw[thick,postaction={on each segment={mid arrow}}] (p13) -- (r3);
     
     \draw[thick,postaction={on each segment={mid arrow}}] (p23) -- (r2);
     \draw[thick,postaction={on each segment={mid arrow}}] (p23) -- (r3);
     
     %labels for upper points
    \node[above=0.1cm] at (r1) {$r_1$};
    \node[above=0.1cm] at (r2) {$r_2$};
    \node[above=0.1cm] at (r3) {$r_3$};
    
    %Labels for intermediate points
    \node[above=0.35cm] at (p13) {$\hat{J}_{12\rightarrow 13}$};
    \node[left=0.15cm] at (p12) {$\hat{J}_{12 \rightarrow 12}$};
    \node[right=0.15cm] at (p23) {$\hat{J}_{12 \rightarrow 23}$};
    
    \draw[blue,postaction={on each segment={mid arrow}}] (c1) to  [out=100,in=-90] (r1);
    
    \draw[blue,postaction={on each segment={mid arrow}}] (c2) to  [out=80,in=-90] (r3);
    
    \draw[blue,postaction={on each segment={mid arrow}}] (c1) to  [out=105,in=-130] (-1.8,1.7);
    \draw[blue,postaction={on each segment={mid arrow}}] (-1.8,1.7) to  [out=45,in=135] (r2);
    
    \draw[blue,postaction={on each segment={mid arrow}}] (c2) to  [out=75,in=-50] (1.8,1.7);
    \draw[blue,postaction={on each segment={mid arrow}}] (1.8,1.7) to  [out=140,in=45] (r2);
    
    \draw[blue,postaction={on each segment={mid arrow}}] (c1) to  [out=110,in=-135] (-2,1.9);
    \draw[blue,postaction={on each segment={mid arrow}}] (-2,1.9) to  [out=40,in=140] (r3);

    \draw[blue,postaction={on each segment={mid arrow}}] (c2) to  [out=70,in=-45] (2,1.9);
    \draw[blue,postaction={on each segment={mid arrow}}] (2,1.9) to  [out=140,in=40] (r1);    
    \end{tikzpicture}
    
    \caption{Schematic diagram of the boundary causal structure in $2$-to-$3$ holographic scattering. The regions $\hat{J}_{12\rightarrow jk}$ are all nonempty, meaning it is possible for signals to travel from $c_1$ and $c_2$ meet, and then get to either of $r_j$ or $r_k$. There are also causal curves that travel directly from $c_i$ to $r_j$ without passing through any of the regions $\hat{J}_{12\rightarrow jk}$. This fact is crucial for the entanglement-free procedures.}
    \label{fig:2-3-scattering-withblue}
\end{figure}
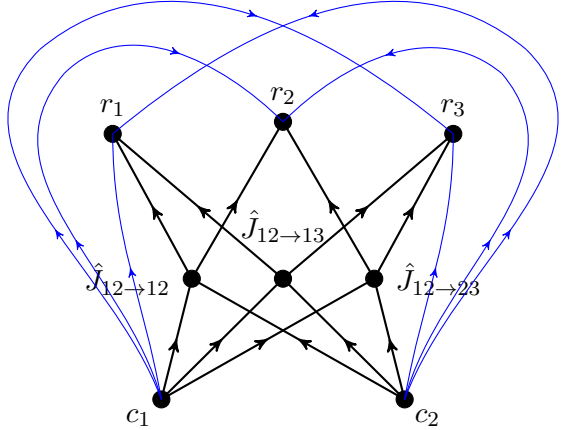

In Section \ref{sec:higher-d} we discussed $2$-to-$3$ scattering arrangements of points. Setting up quantum information theoretic tasks of the type discussed in Section \ref{sec:taskssection} on such arrangements of points seems a natural approach to extending the connected wedge theorem to asymptotically global AdS$_{d+1}$ spacetimes for $d>2$. We point out here however that, at least for those tasks we can most easily understand, entanglement is not necessary between any of the causal regions associated with the scattering. We take this as suggesting the connected wedge theorem does not extend in this obvious way to higher dimensions, though other modifications may be possible. 

For concreteness consider the following task. At $c_1$, a quantum system $R$ is input; at $c_2$ a classical trit $q\in \{1,2,3\}$ is input. The requirement is that $R$ be output at $r_q$. In the bulk picture, where $J_{12\rightarrow 123}$ is nonempty, there is an obvious procedure to complete this --- bring $R$ and $q$ into $J_{12\rightarrow 123}$ and route $R$ towards the appropriate output point based on $q$. In the boundary, where $\hat{J}_{12\rightarrow 123}$ is empty, there is a procedure which does not require entanglement between the input regions $\hat{J}_{i\rightarrow 123}$ nor between the intermediate regions $\hat{J}_{12\rightarrow jk}$. 

A naive procedure is as follows. At $c_1$, record $R$ into an error correcting code that has three shares and corrects one erasure error. Call the three shares $ABC$, and label the system purifying the state on $R$ by $\bar{R}$, so that
\begin{align}\label{eq:QECencoding}
    \ket{\psi}_{\bar{R}R}\rightarrow \ket{\Psi}_{\bar{R}ABC}.
\end{align}
Now send $A$ to $\hat{J}_{12\rightarrow 12},$ $B$ to $\hat{J}_{12\rightarrow 13}$, and $C$ to $\hat{J}_{12\rightarrow 23}$. At $c_2$, create copies of $q$ and send one copy to each of the $\hat{J}_{12\rightarrow jk}$. At each of the $\hat{J}_{12\rightarrow jk}$, forward the share of the error correcting code to $r_q$ if $q\in \{j,k\}$. This will be possible at two of the three regions, so that two of the three shares will arrive at the correct output point $r_q$. The quantum system $R$ can then be recovered from these two shares. 

The procedure above avoids using entanglement between the input regions $\hat{J}_{i\rightarrow 123}$. However, quantum error correcting codes record information into highly entangled states. In particular the above procedure would create entanglement among the boundary regions $\hat{J}_{12\rightarrow jk}$. This entanglement too can be avoided however. To do this we exploit an additional feature of the boundary causal structure shown in Figure \ref{fig:2-3-scattering-withblue}: In the boundary, there are causal curves from the input points to the output points that avoid the intermediate regions. 

These intermediate region avoiding curves can be used to hide the entanglement between the intermediate regions. Begin by again recording $R$ into an error correcting code as in equation \eqref{eq:QECencoding}, but now additionally exploit the quantum one-time pad \cite{ambainis2000private} (see appendix \ref{sec:22appendix}) to encrypt each of the shares $A$, $B$ and $C$. Thus at $c_2$ we encode $R$ according to
\begin{align}
    \ket{\psi}_{\bar{R}R} \rightarrow \sum_{k_a,k_b,k_c\in \{0,1\}^{\times 2n}}\ket{k_ak_b}_{X_1}\otimes \ket{k_ak_c}_{X_2}\otimes \ket{k_bk_c}_{X_3} \otimes [U_A^{k_a} U_B^{k_b} U_C^{k_c}] \ket{\Psi}_{RABC}
\end{align}
The one-time pad provides a way to choose the sets of unitaries $\{ U^k\}$ such that averaging over $k$ produces the maximally mixed state. The procedure now is to send the $A$, $B$ and $C$ shares through the intermediate regions as before, but now additionally send the systems $X_i$ to the corresponding $r_i$ directly along the intermediate region avoiding curves. Now the intermediate regions remain unentangled, since the state on $ABC$ is maximally mixed. Following the same routing procedure as before, two shares from the error correcting code will again arrive at the appropriate output points. The systems $X_i$ can be used to undo the action of the one-time pad before recovering $R$ from the shares of the error correcting code.

We have also studied generalizations of the $\textbf{B}_{84}$ task to the $2$-to-$3$ setting and found entanglement-free protocols in that case. As with the ``routing'' task described above, these protocols involve use of a one-time pad. Those protocols however rely on the only quantum operation performed being a Hadamard, which happens to be in the Clifford group. Clifford operations have simple conjugation properties with Pauli operations, and we can chose the one-time pad to involve only Paulis. These coincidences allow an entanglement-free procedure to be designed simply. For more general tasks we do not know how to construct entanglement-free protocols, or if they exist. It is interesting that an extremal surface calculation may resolve this question. 

\bibliographystyle{JHEP}
\bibliography{biblio}

\end{document}